\documentclass[12pt]{article}
\usepackage{geometry}
\geometry{left=3.5cm,right=3.5cm}

\usepackage{amsmath}
\usepackage{wrapfig}
\usepackage{enumerate}
\usepackage{xspace}
\usepackage{hyperref}
\hypersetup{colorlinks=true,urlcolor=blue,linkcolor=blue,citecolor=blue}
\usepackage{amssymb}
\usepackage{mathrsfs}
\usepackage{color}
\usepackage{tikz}
\usepackage{appendix}
\usepackage{authblk}
\usepackage{xcolor}
\usepackage{amsthm}

\usepackage{hyperref}
\hypersetup{colorlinks=true,urlcolor=blue,linkcolor=blue,citecolor=blue}

\newcommand{\adk}{AD-$k$\xspace}
\newcommand{\adone}{AD-$1$\xspace}
\newcommand{\adtwo}{AD-$2$\xspace}
\newcommand{\adinfty}{AD-$\infty$\xspace}
\newcommand{\gtins}{$Gt=(V,E,\{f_v\}_{v\in V})$\xspace}
\newcommand{\trins}{$Tr=(V,E,\{\cD\}_{v\in V})$\xspace}

\newcommand{\bR}{\mathbb{R}}
\newcommand{\cD}{\mathcal{D}}
\newcommand{\bx}{\textbf{x}}

\newtheorem*{conjecture}{\textbf{Conjecture}}
\newtheorem{theorem}{Theorem}
\newtheorem{lemma}{Lemma}

\newtheorem{corollary}[theorem]{Corollary}
\newtheorem{definition}{Definition}

\begin{document}

	\title{Higher order monotonicity and submodularity of influence in social networks: from local to global
}

\author[1]{Wei Chen}
\author[2]{Qiang Li}
\author[3]{Xiaohan Shan}
\author[2]{Xiaoming Sun}
\author[2]{Jialin Zhang}

\affil[1]{Microsoft Research Asia}
\affil[2]{Institute of Computing Technology, Chinese Academy of Sciences}
\affil[3]{Department of Computer Science and Technology, Tsinghua University}
\maketitle

	\begin{abstract}
	Kempe, Kleinberg and Tardos (KKT) \cite{kempe2003infmax} proposed the following conjecture about the general threshold model in social networks: local monotonicity and submodularity implies global monotonicity
	and submodularity.
	That is, if the threshold function of every node is monotone and submodular, then the spread function $\sigma(S)$ is monotone and submodular, where $S$ is a seed set and the spread function $\sigma(S)$ denotes the expected number
	of active nodes at termination of a diffusion process starting from $S$.
	The correctness of this conjecture has been proved by Mossel and Roch \cite{Mossel2010local2global}.
	In this paper, we first provide the concept \adk (Alternating Difference-$k$) as a generalization of
	monotonicity and submodularity.
	Specifically, a set function $f$ is called \adk if all the $\ell$-th order differences of $f$ on all
	inputs have sign $(-1)^{\ell+1}$ for every $\ell\leq k$.
	Note that \adone corresponds to monotonicity and \adtwo corresponds to monotonicity and submodularity.
	We propose a refined version of KKT's conjecture: in the general threshold model, local \adk implies global \adk.
	The original KKT conjecture corresponds to the case for \adtwo, and the case for \adone is the trivial one
	of local monotonicity implying global monotonicity.
	By utilizing continuous extensions of set functions as well as social graph constructions, we prove the correctness of our conjecture when the social graph is a directed acyclic graph (DAG).
	Furthermore, we affirm our conjecture on general social graphs when $k=\infty$.
\end{abstract}	

\section{Introduction}\label{sec:intro}
With the wide popularity of social media and social network sites such as Facebook, Twitter, WeChat, etc.,
social networks have become a powerful platform for spreading information, ideas and products among individuals.
In particular, product marketing through social networks has attracted a large number of customers.
Motivated by this background, influence diffusion in social networks has been extensively studied (cf. \cite{chen2013information,kempe2015infmax,shan2019cumulative,Khan2021intelligent}).

A landmark work about influence in social networks is \cite{kempe2003infmax}, in which Kempe, Kleinberg, and Tardos formulate some of the most popular diffusion models that become cornerstones of follow-up studies.
These famous propagation models include Independent Cascade (IC) model, Linear Threshold (LT) model, Triggering model and General Threshold (GT) model, etc. A propagation model captures the process by which information is spread among users in social networks.
Figure \ref{fig:relation_of_models} shows  the relationship between these models.
In Figure \ref{fig:relation_of_models}, if model A is a subset of model B, it means that any instance of model A can be translated to an instance of model B, that is, model A is a special case of model B.
Thus, the general threshold model is a broad generalization of a variety of natural propagation models.

\begin{figure}
	\begin{center}
		\includegraphics[scale=0.4]{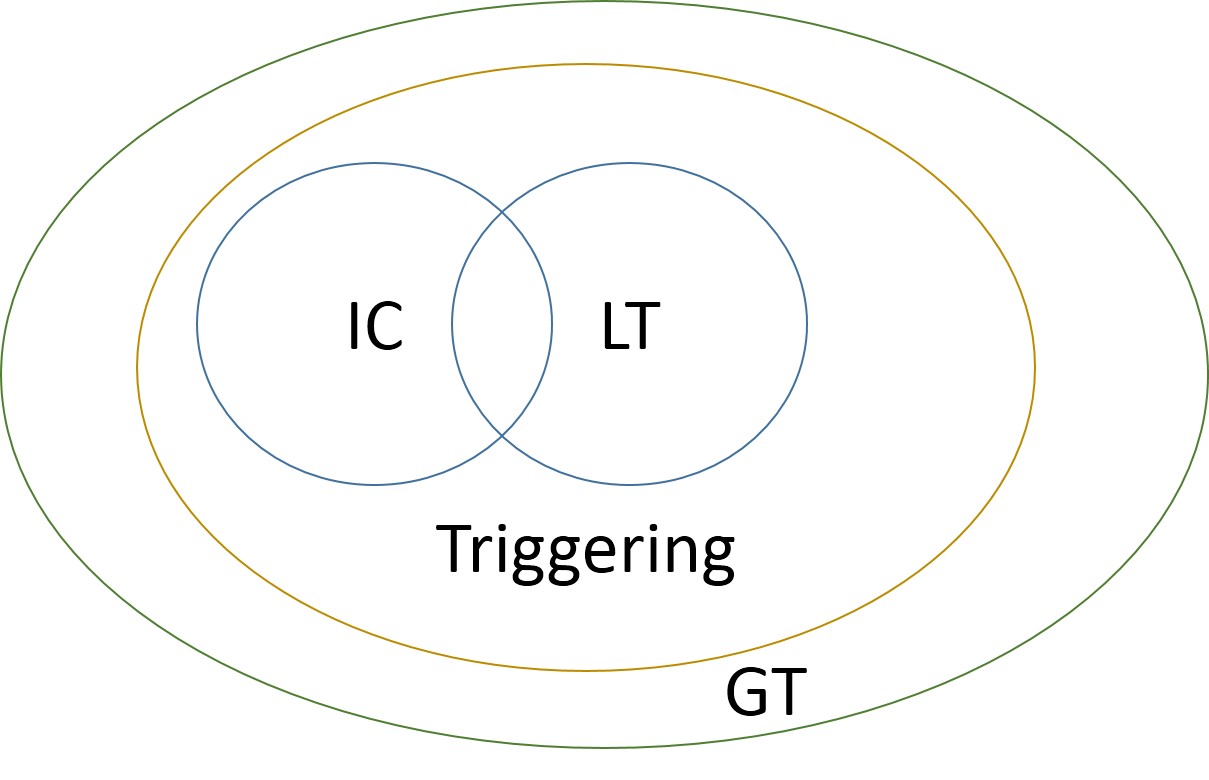}
		\caption{Relationship among propagation models}
	\end{center}
	\label{fig:relation_of_models}
\end{figure}

For the most general model GT, Kempe, Kleinberg and Tardos (KKT)  proposed an appealing conjecture.
Before stating this conjecture, we first briefly introduce GT model, and the formal definition is presented in Section \ref{sec:preliminary}.
A social network is a directed graph $G=(V,E)$, where $V$ is the node set representing users in social networks and $E$ is the edge set representing relationships between users.
In GT model, each individual $v\in V$ has a threshold function $f_v: 2^V\to [0,1]$, which measures the influence of its neighbors on $v$, as well as a threshold value $\theta_v$ randomly drawn from $[0,1]$.
Initially, a set $S$ is selected as the seed set and nodes in $S$ are active artificially and other nodes are inactive.
At any time, $v$ becomes active if the threshold function value $f_v(T)\geq\theta_v$, where $T$ is the set of current active nodes. 
This process is progressive, that is, an active node stays active forever.
At  the end of the process, whether a node is active or not is a random event and thus the number of active nodes is a random variable.
Let $\sigma(S)$ be the spread function of a seed set $S$, which is the expected number of active nodes at the end of a diffusion process starting from seed set $S$.
Now we can present KKT's conjecture about general threshold model:

\begin{conjecture}[\cite{kempe2003infmax}]
In general threshold model, whenever all threshold functions $f_v$ at every node are monotone and submodular, the resulting influence function $\sigma$ is monotone and submodular as well.
\end{conjecture}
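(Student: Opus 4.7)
The plan is to reformulate the stochastic diffusion as an expectation over independent node-local random objects, so that $\sigma(S) = \mathbb{E}_\omega[|A_\infty(S,\omega)|]$ for a terminal active set $A_\infty(S,\omega)$, and then prove monotonicity and submodularity by exhibiting couplings that translate the local properties of each $f_v$ into realization-wise inequalities on $A_\infty$. Monotonicity is the easy half: under the standard coupling where each node samples an independent uniform threshold $\theta_v \in [0,1]$, monotone $f_v$ implies, by induction on the diffusion step, that $A_\infty(S,\theta) \subseteq A_\infty(S',\theta)$ whenever $S \subseteq S'$; integrating gives $\sigma(S) \leq \sigma(S')$.

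For submodularity, the flat threshold-coupling is insufficient, since $|A_\infty(\cdot,\theta)|$ is generally not pointwise submodular. I would replace $\theta_v$ by a richer node-local random object that encodes not just whether $v$ activates but which subset of the current active set first pushes $v$ over its threshold. For monotone submodular $f_v$, the marginal contributions $f_v(U \cup \{u\}) - f_v(U)$ are nonnegative and nonincreasing in $U$, so they define a proper probability distribution that can be used to schedule stepwise activations coherently across different seed sets: for each $v$ draw a uniformly random order on $V$ together with an auxiliary variate, and use the marginals of $f_v$ along this order to declare the \emph{responsible set} that triggers $v$. This construction yields a four-way coupling across the seed sets $S$, $S\cup\{w\}$, $T$, $T\cup\{w\}$ (for $S\subseteq T$, $w\notin T$) in which a single realization $\omega$ deterministically drives all four diffusions.

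With such a coupling, the target is the pointwise inequality
\[
|A_\infty(S\cup\{w\},\omega)| + |A_\infty(T,\omega)| \;\geq\; |A_\infty(S,\omega)| + |A_\infty(T\cup\{w\},\omega)|,
\]
which, after taking expectations, yields submodularity of $\sigma$. The inequality is proved by induction on the diffusion step, maintaining an invariant that compares the newly-activated nodes across the four runs. The main obstacle is that social graphs admit directed cycles, so a discrepancy between two runs at an early step can be amplified by later feedback; the coupling must therefore be engineered so that the invariant survives every inductive step, applying local submodularity of each $f_v$ precisely when a node newly activates in some runs but not in others. Turning this node-local application of submodularity into a robust step-by-step invariant that withstands feedback is the technical heart of the argument, and is what distinguishes the general-graph case from the DAG case, where a straightforward topological induction already suffices.
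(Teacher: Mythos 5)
The statement you were asked to prove is labeled a conjecture in the paper; the paper does \emph{not} prove it and instead cites Mossel and Roch \cite{Mossel2010local2global} for its resolution. The paper's own machinery (multilinear extensions via Lemmas \ref{lem:GtoH}--\ref{lem:Htoh}, and the graph surgeries of Lemmas \ref{lem:graphequivalent} and \ref{lem:graphequivalent_dag}) establishes the refined \adk conjecture only on layered graphs and DAGs (Theorems \ref{thm:layeradk} and \ref{thm:adk_dag}); the general-graph case of KKT's conjecture is precisely what that technique does not reach.

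Your proposal is in the spirit of a Mossel--Roch-style coupling argument but stops exactly at the point where the theorem lives. The monotonicity half is fine. For submodularity, you observe correctly that the flat uniform-threshold coupling does not make $|A_\infty(\cdot,\theta)|$ pointwise submodular, and you propose to replace $\theta_v$ by a richer ``responsible set'' random object supporting a four-way coupling so that $|A_\infty(S\cup\{w\},\omega)| + |A_\infty(T,\omega)| \geq |A_\infty(S,\omega)| + |A_\infty(T\cup\{w\},\omega)|$ holds realization-wise. But you then explicitly defer both the construction of that coupling and the inductive invariant that survives feedback on cyclic graphs, calling it ``the technical heart of the argument.'' That deferral is the entire content of the result. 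Moreover, the existence of any coupling delivering the pointwise inequality is itself unjustified: a live-edge sample space gives pointwise submodularity of reachability, but by Theorem \ref{thm:gttotrigger} a monotone submodular $f_v$ that is not \adinfty has no equivalent triggering-model formulation, so no live-edge-style representation is available for free, and the published proof of the conjecture establishes submodularity of the expectation rather than of a single realization. Until you exhibit the random object concretely, state the invariant, and push it through a cycle, this is a plan for a proof rather than a proof.
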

In the above conjecture, the threshold function of a node is monotone means that this node is more likely to become active if a larger set of its neighbors is infected.
The threshold function of a node is submodular corresponding to the fact that the marginal effect of each neighbor of this node decreases as the set of active nodes increases.
Formally, a set function $f$ is monotone if $f(S) \le f(T)$ for all $S\subseteq T$, and is
submodular if $f(S\cup \{u\}) - f(S) \ge f(T\cup \{u\}) - f(T)$ for all $S\subseteq T$ and $u \not\in T$.

KKT's  conjecture can be roughly stated as follows: in GT model, local monotonicity and submodularity imply global monotonicity and submodularity,
where local monotonicity and submodularity mean that the threshold function of each node is monotone and submodular,
and global monotonicity and submodularity means that the influence spread function is monotone and submodular.
KKT's conjecture attracted a lot of attention and finally was proved by Mossel and Roch in \cite{Mossel2010local2global}.

Indeed, submodularity can be regarded as high order monotonicity since we can define them by the difference of
a set function.
In this way, a set function $f:2^V\to \bR$ is monotonously increasing means $\Delta_xf(S)=f(S\cup\{x\})-f(S)\geq 0$ for every $S\subseteq V$ and $x\in V\setminus S$.
If not otherwise specified, we say a function is monotone in this paper if the function is monotonously increasing.
Similar to monotonicity, it is easy to show that $f$ is submodular if and only if for every $S\subseteq V$ and $\{x_1,x_2\}\subseteq V\setminus S$, $\Delta_{x_2}\Delta_{x_1}f(S)\leq 0$, where
 	$\Delta_{x_2}\Delta_{x_1}f(S)=(f(S\cup\{x_1,x_2\})-f(S\cup\{x_2\}))-(f(S\cup\{x_1\})-f(S))$.
That is, $-\Delta_{x_2}\Delta_{x_1}f(S)\geq 0$, for every $\{x_1,x_2\}\subseteq V\setminus S$.
These inequalities can be generalized naturally:
$(-1)^{(k+1)}\Delta_{x_k}\Delta_{x_{k-1}}\cdots\Delta_{x_1}f(S)\geq 0$, for every $k\geq 0$ and $\{x_1,x_2,\cdots,x_k\}\subseteq V\setminus S$.
In this paper, we call this property of a set function as \adk (Alternating Difference-$k$).
Roughly speaking, a function $f$ is \adk means that $f$'s $\ell$-th order difference has sign $(-1)^{\ell+1}$, for every $\ell\leq k$.
The formal definition of \adk is shown in Definition \ref{def:ADkset}.
Obviously, \adk is a property of set functions and it encompasses monotonicity and submodularity as special cases.

In addition to the classical monotonicity and submodularity, \adk can be applied into an important conclusion in social networks when $k=\infty$.
Indeed, $k=\infty$ is a convenient statement which stands for  any order difference of a set function (see Definition \ref{def:ADkset}).
This conclusion is about the relationship of GT model and another important propagation model, the triggering model (Definition \ref{def:triggering}).
As shown in Figure \ref{fig:relation_of_models}, triggering model is a special case of GT model.
On the other hand, in \cite{kempe2005decreasingcascade}, Kempe et al. presented an example implying that GT and triggering model are not equivalent with each other, but they did not give a mathematical characterization when an instance of GT model can be transformed into an instance of  triggering model.
In \cite{Salek2010youshareishare}, Salek et al.\ made up for that and provided the necessary and sufficient condition: the threshold function of each node in the GT instance is \adinfty.

From what has been discussed above, \adk is a very general and appealing property.
In this paper, we present the following refined version of KKT's conjecture:

\begin{conjecture}
In the general threshold model, if the threshold function $f_v$ at every node $v$ is \adk, the resulting influence function $\sigma$ is also \adk.
\end{conjecture}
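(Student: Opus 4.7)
The plan is to attack the refined conjecture in two separate regimes, matching the two results announced in the abstract: (i) arbitrary $k$ on directed acyclic graphs, and (ii) $k=\infty$ on arbitrary directed graphs.

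For (i), the first step is to translate AD-$k$ into an analytic statement about continuous extensions. Specifically, I would show that a set function $f$ is AD-$k$ if and only if its multilinear extension $\tilde f(\mathbf{x}) = \sum_{S}f(S)\prod_{i\in S}x_i\prod_{i\notin S}(1-x_i)$ satisfies the sign condition $(-1)^{\ell+1}\,\partial^\ell \tilde f/\partial x_{i_1}\cdots \partial x_{i_\ell}\ge 0$ for every $\ell\le k$ and every choice of distinct coordinates. Because $\tilde f$ is multilinear, these partial derivatives are exactly the discrete difference operators $\Delta_{x_{i_\ell}}\cdots\Delta_{x_{i_1}}f$ evaluated at a fractional point, so this is a clean equivalence rather than a heavy calculation.

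Once AD-$k$ is rephrased as a sign pattern on mixed derivatives, I would induct on the topological order of the DAG. For a source node the activation probability is just $f_v$ restricted to $S$, giving the base case. For an interior node $v$, its activation probability as a function of the (fractional) seed vector is obtained by composing $\tilde{f_v}$ with the activation probabilities of its in-neighbors, which by the inductive hypothesis each have the prescribed alternating sign pattern in their own derivatives. I would close the induction by a multivariate chain-rule / Faà di Bruno expansion of the $\ell$-th derivative of the composition as a sum of products of outer-derivatives and inner-derivatives; under the inductive sign assumptions each such product is seen to contribute sign $(-1)^{\ell+1}$. This composition step is where I expect the real obstacle to lie, since one has to check that every term of the Faà di Bruno expansion carries the correct sign simultaneously; a possible auxiliary device is a ``graph reduction'' that splices the threshold gadget of $v$ into the DAG and shrinks the instance, which is presumably the ``social graph construction'' alluded to in the abstract and would let one avoid a brute-force sign accounting over set partitions.

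For (ii), I would invoke the Salek--Shayandeh--Kempe characterization already cited in the introduction: a threshold function is AD-$\infty$ exactly when it admits a distribution over trigger sets. Thus any GT instance with all-AD-$\infty$ threshold functions converts to an equivalent triggering-model instance even on a general (cyclic) graph. In the triggering model the spread decomposes as
\[
\sigma(S)=\sum_{v\in V}\Pr[\,v\text{ is reachable from }S\text{ in the random live-edge graph}\,],
\]
and each summand is the probability of a monotone union event $\bigcup_{s\in S}\{v\text{ reachable from }s\}$. A direct inclusion-exclusion on this union shows that all higher-order alternating differences of $\Pr[v\text{ reachable from }S]$ in $S$ have the AD-$\infty$ signs; summing over $v$ and over the randomness of the trigger sets preserves the property, yielding global AD-$\infty$. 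Here the work is essentially a sign-check in the inclusion-exclusion expansion and should be routine once the reduction to the triggering model is in hand.
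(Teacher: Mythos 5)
Your $k=\infty$ argument is essentially sound and in fact slightly cleaner than the paper's. You would write $P_u(S)=\Pr\bigl[\bigcup_{s\in S}A_s\bigr]$ where $A_s$ is the event that $u$ is reachable from $s$ in the random live-edge graph, note that $1-P_u(S)=\Pr\bigl[\bigcap_{s\in S}\overline{A_s}\bigr]$, and check directly that $(-1)^{|A|+1}\Delta_A P_u(S)=\Pr\bigl[\bigcap_{s\in S}\overline{A_s}\cap\bigcap_{a\in A}A_a\bigr]\ge 0$. The paper reaches the same conclusion, but more circuitously: Lemma~\ref{lem:adinfty_sigma} defines $R_u$ as the law of the random set $\{v: u\in\Gamma(L,\{v\})\}$ and establishes $R_u(S)=\sum_{T\subseteq S}(-1)^{|S|-|T|}(1-P_u(V\setminus T))$ by M\"obius inversion, and Lemmas~\ref{lem:hemptygeq0}--\ref{lem:VsetminusS2S} then transport the nonnegativity of $R_u$ back to alternating signs of $\Delta_A P_u$. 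Your route avoids the chain of three difference-identity lemmas at the cost of nothing; either way the reduction to the triggering model via Theorem~\ref{thm:gttotrigger} is the real content, and you have that.

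The DAG argument, however, has a genuine gap exactly where you suspect trouble. The composition formula you propose for an interior node,
\[
P_v(\mathbf{x}) \;=\; \tilde f_v\bigl(P_{u_1}(\mathbf{x}),\dots,P_{u_d}(\mathbf{x})\bigr),\qquad u_1,\dots,u_d = \text{in-neighbors of } v,
\]
is \emph{false} for a general DAG: the multilinear extension $\tilde f_v$ computes an expectation against a \emph{product} distribution over in-neighbor states, but the actual joint activation law of $u_1,\dots,u_d$ is not a product measure whenever two in-neighbors share an ancestor. So a Fa\`a di Bruno sign-accounting on this composition, however carefully done, would be proving the wrong identity. This is not a cosmetic issue one can patch term-by-term; the formula itself fails.

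The paper sidesteps the correlation problem by never composing $\tilde f_v$ with ancestor activation probabilities. Instead it specializes to a \emph{layered graph} with seeds in the bottom layer $V_m$, where it can condition exactly on the realized active set $S_{m-1}\subseteq V_{m-1}$: Lemma~\ref{lem:analysis_formula} gives $P_v(S_m)=\sum_{S_{m-1}}\prod_{u\in S_{m-1}}f_u(S_m)\prod_{u\notin S_{m-1}}(1-f_u(S_m))\,P_v(S_{m-1})$. This \emph{is} a composition, but with the roles inverted from yours --- the \emph{outer} function is the multilinear extension of the activation probability $P_v^{(m-1)}$ one layer up, and the \emph{inner} functions are the threshold functions $f_u$ of layer $m-1$. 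That composition is legitimate because, given $S_m$, the layer-$(m-1)$ activations really are independent Bernoullis with means $f_u(S_m)$, and given $S_{m-1}$ the remaining process is conditionally independent of layer $m$. Theorem~\ref{thm:gtoh}, Lemma~\ref{lem:GtoH}, Corollary~\ref{coro:gtoG} and Lemma~\ref{lem:Htoh} then carry out precisely the multilinear-extension/partition sign calculation you sketch, so your instinct about the analytic machinery is right --- it just has to be applied to this valid composition, not the na\"ive one. The remaining work, which you gesture at as a ``graph reduction,'' is the two constructions of Lemmas~\ref{lem:graphequivalent} and~\ref{lem:graphequivalent_dag}: duplicating each layer so that arbitrary seed sets become bottom-layer seeds, and inserting relay nodes along skip-layer edges so that a DAG becomes layered. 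Each construction has to be accompanied by a check that the new threshold functions remain AD-$k$, which is done in the appendix.
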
 

The result  in \cite{Mossel2010local2global} shows that  our conjecture is true when $k=1$ and $k=2$.
We study the case $k>2$ in this paper and our contributions are as follows:
(a) We put forward the definition of \adk as well as a more generalized conjecture than the conjecture proposed by KKT.
(b) We prove the correctness of our conjecture when the underlying graph of the GT model is a DAG for every $k>2$.
(c) When $k=\infty$, we prove that our conjecture is always correct for all general graphs.


\subsection{Related work}
The classical influence maximization problem is to find a seed set of at most $k$ nodes to maximize the expected number of active nodes.
It was first studied as an algorithmic problem by Domingos and Richardson \cite{domingos2001mining}
and Richardson and Domingos~\cite{richardson2002mining}.
Kempe et al. (KKT) \cite{kempe2003infmax} first formulated the problem as a discrete optimization problem.
They summarized several propagation models including the famous Independent Cascade (IC) model and the Linear Threshold (LT) model,
and obtained approximation algorithms for influence maximization by applying submodular function maximization.
Since then, there has been a large amount of follow-up work (see a more detailed survey in the monograph of Chen et al. \cite{chen2013information}).

One aspect of follow-up work focuses on algorithms of influence maximization problem.
We review several representative papers as follows:
Leskovec et al. \cite{Leskovec2007costeffective} presented a ``lazy-forward'' optimization method in selecting new seeds,
which greatly reduce the number of influence spread evaluations. Chen et al.
\cite{chen2010sharpphard,ChenWY09efficientinfluence} proposed scalable algorithms
which are faster than the greedy algorithms proposed in \cite{kempe2005decreasingcascade}.
Borgs et al. \cite{borgs2014rrset}, Tang et al. \cite{tang2015rrset,tang2014newrrset} and Nguyen et al. \cite{mtai2016sigmod} proposed a series of more effective algorithms for influence maximization in large social networks that has both theoretical guarantee and practical efficiency.

Another aspect is about the propagation models and our work falls into this category.
The most widely used propagation models such as the independent cascade model, the linear threshold model, the triggering model and the general threshold model were proposed in \cite{kempe2003infmax,kempe2015infmax}.
Subsequent to this work, KKT proposed decreasing cascade model in \cite{kempe2005decreasingcascade}.
In \cite{chen2009approximability}, Chen studied the fixed threshold model and its computational hardness for
minimizing the number of seeds needed to influence the whole graph.
In \cite{kempe2003infmax}, KKT proposed a conjecture that in the general threshold model, the spread function is monotone and submodular if the threshold function of each node is monotone and submodular.
Mossel and Roch \cite{Mossel2007local2global,Mossel2010local2global} resolved this conjecture.
In this paper, we generalize KKT's conjecture to higher order submodularity named as \adk.
Note that \adk also relates to some research topics about pseudo-boolean functions (e.g. \cite{Sethpan2005booleanfunction,Grabisch2000mobius}).
\section{Preliminaries}\label{sec:preliminary}
In this section, we introduce formal definitions of two propagation models and the concept of differences.
Before giving the formal definition of the triggering model and the general threshold model, we first introduce some common settings:
(a) In both models, we use discrete time steps $t=0,1,2,\cdots$ to characterize the propagation models.
(b) Each node has two states, inactive and active.
(c) Initially, nodes in seed set $C_0$ are active and all other nodes are inactive.
(d) For any $t\geq 0$, $C_t$ denotes the set of all active nodes at time $t$.
(e) Once a node becomes active, it stays active forever, that is, $C_t\subseteq C_{t+1}$ for any $t$.

\begin{definition}[Triggering model]\label{def:triggering}
	In the triggering model, given a social directed graph $G=(V,E)$, each node $v\in V$ has a distribution $\cD_v$ over $2^{IN(v)}$, where $IN(v)$ denotes the set of $v$'s incoming neighbors.
	Initially, each node $v\in V$ draws a random sample $T_v\in 2^{IN(v)}$ (which we call a ``triggering set'') from $\mathcal{D}_v$, independently.
	Starting from seed set $C_0$, at every time $t\geq 1$, for every inactive node $v\in V\setminus C_{t-1}$, if $T_v\cap C_{t-1}\neq \emptyset$, node $v$ becomes active.
	An instance of triggering model is denoted as \trins.
\end{definition}

\begin{definition}[General threshold model]\label{def:generalthreshold}
	In the general threshold model, given a social directed graph $G=(V,E)$, every node $v\in V$ has a threshold function $f_v: 2^{IN(v)}\to [0,1]$ satisfying that $f_v(\cdot)$ is monotone and $f_v(\emptyset)=0$.
	Initially, each node $v\in V$ independently selects a threshold $\theta_v$ uniformly at random from $[0,1]$.
	Starting from a seed set $C_0$, at every time $t\geq 1$, for every node $v\in V$, if $f_v(C_{t-1}\cap IN(v))\geq \theta_v$, then node $v$ becomes active.
	An instance of general threshold model is denoted by \gtins.
\end{definition}

In the general threshold model, it makes no difference if we express the threshold function of a node $v$ as $f_v: 2^{V}\to [0,1]$ since we can always define $f_v$ as $f_v(S)\triangleq f_v(S\cap IN(v))$ for every $S\subseteq V$.	


Note that any instance of the triggering model can be formulated as an equivalent instance of the general threshold model \cite{kempe2003infmax}.
Here, two instances are equivalent means that the distribution over final active sets under any given seed set for the two instances are the same.
A natural question is about the reverse direction: can any instance of the general threshold model be formulated as an equivalent instance of the triggering model?
In general, this is not true and
KKT presented a counter example for it \cite{kempe2015infmax}.
The next question is which instances of the general threshold model can be translated to instances of the triggering model?
Salek et al. solved this problem by the following theorem.
\begin{theorem}[\cite{Salek2010youshareishare}]\label{thm:gttotrigger}
	Let \gtins be an instance of general threshold model, then Gt has an equivalent triggering model formulation if and only if all $k$-th order differences of $f_v$ have sign $(-1)^{k+1}$, for any $k\geq 0$.
\end{theorem}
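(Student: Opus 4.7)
The plan is to prove both directions of the characterization via Möbius inversion on the subset lattice of $IN(v)$. The guiding observation is that a triggering distribution $\cD_v$ is fully specified by its atom masses $p_T := \Pr[T_v = T]$, and these are related to the induced threshold function $f_v(S)=\Pr[T_v\cap S\neq\emptyset]$ by a Möbius transform; so asking that the $\{p_T\}$ be nonnegative (and sum to $1$) translates exactly into alternating-sign conditions on the differences of $f_v$. Model equivalence in the sense of the theorem then follows from matching marginals: in both models each node $v$ has an independent random source, and conditional on any set $C_{t-1}$ of already-active nodes the probability $v$ activates at time $t$ is exactly $f_v(C_{t-1})$ (threshold crossing in GT, set intersection in triggering), so the distributions over the trajectories $(C_t)_t$ coincide.

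For the forward direction, assume $f_v(S)=1-\Pr[T_v\subseteq V\setminus S]$. I would expand
\[
\Delta_{x_k}\cdots\Delta_{x_1} f_v(S)=\sum_{J\subseteq [k]}(-1)^{k-|J|}f_v(S\cup X_J),
\]
where $X_J=\{x_j:j\in J\}$, substitute the triggering expression, discard the constant $\sum_J(-1)^{k-|J|}=0$ (valid once $k\ge 1$), and then group by the realization $T=T_v$. For $T$ with $T\cap S=\emptyset$ the inner coefficient collapses to $(-1)^k\cdot\mathbf{1}[\{x_1,\ldots,x_k\}\subseteq T]$; all other $T$ contribute zero. This yields
\[
\Delta_{x_k}\cdots\Delta_{x_1} f_v(S)=(-1)^{k+1}\Pr\bigl[T_v\cap S=\emptyset,\ \{x_1,\ldots,x_k\}\subseteq T_v\bigr],
\]
a nonnegative quantity times $(-1)^{k+1}$, which is the required sign.

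For the backward direction, given $f_v$ satisfying \adinfty, I would define candidate atoms
\[
p_T := \sum_{T'\subseteq T}(-1)^{|T|-|T'|}\bigl(1-f_v(IN(v)\setminus T')\bigr)
\]
for each $T\subseteq IN(v)$, the Möbius inverse of the intended cumulative relation $\sum_{T\subseteq T'}p_T = 1-f_v(IN(v)\setminus T')$. Using $\sum_{T'\subseteq T}(-1)^{|T|-|T'|}=0$ for $T\ne\emptyset$ to drop the constant and re-indexing by $J=T\setminus T'$, a direct manipulation identifies
\[
p_T=(-1)^{|T|+1}\,\Delta_{x_{|T|}}\cdots\Delta_{x_1}f_v(IN(v)\setminus T)
\]
where $T=\{x_1,\ldots,x_{|T|}\}$, so the \adinfty hypothesis gives $p_T\ge 0$. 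Plugging $T'=IN(v)$ into the cumulative relation yields $\sum_T p_T=1-f_v(\emptyset)=1$, so $\{p_T\}$ is a legitimate distribution $\cD_v$; a final check shows $\Pr_{T_v\sim\cD_v}[T_v\cap S\ne\emptyset]=f_v(S)$ for all $S$, completing the construction.

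I expect the main obstacle to be the sign bookkeeping — matching the $(-1)^{|J|}$ from the inclusion-exclusion of $\Pr[T_v\subseteq \cdot]$ against the $(-1)^{k-|J|}$ from the definition of the difference operator, and doing the analogous calculation in reverse in the backward direction, so that the \adinfty condition lines up exactly with nonnegativity of the Möbius coefficients. Once these signs are handled cleanly, both directions reduce to one-line applications of Möbius inversion, and the equivalence of the two processes then follows from the matching of conditional activation probabilities at every step.
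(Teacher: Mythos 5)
The paper does not prove Theorem \ref{thm:gttotrigger}: it is cited directly from \cite{Salek2010youshareishare} and used as a black box (its role in this paper is to supply Lemma \ref{lem:adinfty_sigma}), so there is no in-paper proof to compare against. Your proposal is essentially the argument from Salek et al., and the two central calculations check out: the forward-direction identity
\[
\Delta_{x_k}\cdots\Delta_{x_1} f_v(S)=(-1)^{k+1}\Pr\bigl[T_v\cap S=\emptyset,\ \{x_1,\ldots,x_k\}\subseteq T_v\bigr]
\]
follows from $\sum_{J\subseteq [k]\setminus A}(-1)^{k-|J|}=0$ unless $A=[k]$, and the backward-direction identification
\[
p_T=(-1)^{|T|+1}\Delta_T f_v(IN(v)\setminus T)
\]
is verified by re-indexing $B'=T\setminus T'$, from which nonnegativity (via \adinfty) and normalization ($\sum_T p_T = 1-f_v(\emptyset)=1$) both follow.

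Two cautions. First, your bridge sentence ``conditional on any set $C_{t-1}$ of already-active nodes the probability $v$ activates at time $t$ is exactly $f_v(C_{t-1})$'' is not literally correct: conditioned on $v\notin C_{t-1}$, the probability that $v$ activates at step $t$ is the increment $(f_v(C_{t-1})-f_v(C_{t-2}))/(1-f_v(C_{t-2}))$ in \emph{both} models, and it is the matching of these increments, plus the order-independence result of Kempe et al.\ \cite{kempe2005decreasingcascade}, that gives trajectory-level equivalence. Second, your forward direction quietly assumes that an ``equivalent triggering formulation'' means one satisfying $f_v(S)=\Pr[T_v\cap S\neq\emptyset]$ node by node; this is the intended reading, but a fully rigorous proof of the ``only if'' direction should note why any equivalent $\cD_v$ must have this form (by seeding subsets of $IN(v)$ and comparing activation probabilities of $v$). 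Finally, as you implicitly observe by restricting to $k\geq 1$ when you drop the constant $\sum_J(-1)^{k-|J|}$, the ``$k\geq 0$'' in the theorem statement is an overstatement: the $k=0$ case would force $f_v\equiv 0$; the condition is meant for $k\geq 1$.
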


The ``$k$-th order difference'' mentioned in Theorem \ref{thm:gttotrigger} is defined as follows:
\begin{definition}[Difference of set functions]\label{def:difference}
	Given a set function $f:2^V \to \mathbb{R}$ and a subset $A\subseteq V$, the \textbf{difference of $f$ over set $A$} (denoted as $\Delta_A f(\cdot)$) is defined as:
	$\Delta_Af(S)\triangleq \sum_{B\subseteq A}(-1)^{|B|}f(S\cup (A\setminus B))$.
	Specifically, for $x\in V$, $\Delta_xf(S)\triangleq f(S\cup\{x\})-f(S)$.
	When $|A|=k$, $\Delta_A f(\cdot)$ is called a \textbf{$k$-th order difference} of set function $f$.
\end{definition}

Here we derive the generalization from the first-order difference to the $k$-th order difference. We show this in terms of reduction.

Initially,  $A={x_1}$, $\Delta_{x_1}f(S)= f(S\cup\{x_1\})-f(S)$ meets formula: 
\begin{equation}\label{eq:adkformal}
\Delta_Af(S)=\sum_{B\subseteq A}(-1)^{|B|}f(S\cup (A\setminus B)).
\end{equation}

 Suppose Equation (\ref{eq:adkformal}) holds for every $A\subseteq V$ with $|A|<k$, then when $|A|=k$ and $A=\{x_1,x_2,\cdots,x_k\}$,
 
 \begin{equation*}
 	\begin{aligned}
 		\Delta_Af(S)&=\Delta_{x_k}\Delta_{A\setminus x_k}f(S)\\
 		&=\Delta_{A\setminus \{x_k\}}f(S\cup x_k)-\Delta_{A\setminus x_k}f(S)\\
 		&=\sum_{B\subseteq A\setminus x_k}(-1)^{|B|}f(S\cup (A\setminus B))-\sum_{B\subseteq A\setminus x_k}(-1)^{|B|}f(S\cup (A\setminus \{x_k\}\setminus B))\\
 		&=\sum_{B\subseteq A}(-1)^{|B|}f(S\cup (A\setminus B)).
 	\end{aligned}
 \end{equation*}
 
Based on Equation (\ref{eq:adkformal}),  $\Delta_Af(S)$ only depends on the elements in set $A$, not about the order in it. Formally, for $A=\{x_1, x_2,\cdots, x_k\}$ and any permutation $\pi$ over $[k]$, $\Delta_Af(\cdot) = \Delta_{x_{\pi(k)}}\Delta_{x_{\pi(k-1)}}\cdots \Delta_{x_{\pi(1)}}f(\cdot)$, i.e. the order of difference does not matter here. This is the reason that we call it the high order difference. Note that if $A\cap S\neq \emptyset$, we have $\Delta_Af(S)=0$.


\section{Definition and problem}\label{sec:Problems}
\subsection{Definition of \adk}

Based on Theorem \ref{thm:gttotrigger}, an instance of the general threshold model has an equivalent instance of the triggering model if and only if the threshold function of each node has alternative sign of difference.
Now we formally define the above condition.

\begin{definition}[\adk and \adinfty of set function] \label{def:ADkset}
	Given a set function $f:2^V \to \bR$, $f$ is\textbf{ Alternating Difference-$k$ (AD-$k$)}
	if $(-1)^{|A|+1}\Delta_Af(S)\geq 0$ for any set $A$ and $S\subseteq V$, with $|A|\leq k$.
	If a function $f$ is AD-$n$ where $n=|V|$, we also call $f$ as \textbf{\adinfty}.
\end{definition}

By definition, if a set function $f$ is \adk, then it is also AD-$(k-1)$. If a set function $f$ is \adinfty, then for any $k\leq n$, $f$  is AD-$k$.
\adk captures monotonicity and submodularity as special cases:
a set function $f$ is \adone means $f$ is monotone and $f$ is \adtwo means $f$ is monotone and submodular.
Note that the \adk property satisfies the closure property of addition, that is, given $n$ \adk functions $\{g_i\}_{i\in [n]}$ and $n$ nonnegative real numbers $\{w_i\}_{i\in [n]}$, the function $\sum_{i\in [n]}w_ig_i$ is also \adk.

\subsection{Examples}\label{sec:example}
To better understand \adk, we take some functions as examples.

We consider the case where the underlying graph is bipartite (Figure \ref{fig:examples}). For each node in $v\in V$, $v$'s threshold function is $f_v:2^U\to [0,1]$ and $f_v(S)$ only depends on the number of $v$'s incoming neighbors in $S$, for each set $S\subseteq U$.

	\begin{figure}[h]
	\centering
	\includegraphics[scale=0.4]{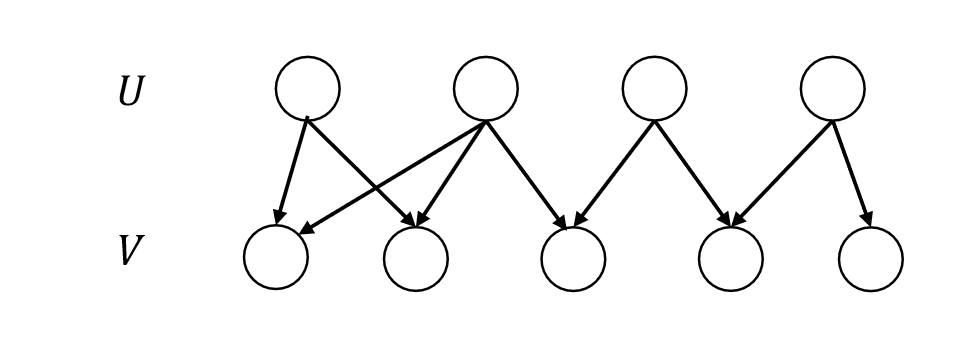}
	\caption{The underlying graph is bipartite}\label{fig:examples}
    \end{figure}

Now we define two kinds of threshold function as representatives of \adinfty function and general \adk function, respectively. The first threshold function is indeed a coverage function:

\begin{equation}\label{eq:coverage}
	f_v(S) =
	\left\lbrace
	\begin{aligned}
		&0,		~ &|S|=0; \\
		&1,		~ &|S|\geq 1. \\
	\end{aligned}
	\right.
\end{equation}

Note that a coverage function can characterize the Max-$k$-Cover problem. Specifically, we can use $S$ to denote sets and the function value indicates whether an element is covered successfully. Under coverage function, a node can be activated as long as any in-neighbour is active. 

Another threshold function is defined as:

\begin{equation}\label{eq:adk_butnot_adk+1}
	g_v(S) =
	\left\lbrace
	\begin{aligned}
		&0,		~ &|S|=0; \\
		&\frac{1}{2},		~ &|S|= 1; \\
		&\frac{1}{2}+\frac{1}{2k-1},		~ &|S|\geq 2.\\
	\end{aligned}
	\right.
\end{equation}
where $k$ is a parameter with $k\leq |U|$. Under this function, the probability that a node can be activated is a three-segment function.

\subsubsection{Function (\ref{eq:coverage}) is \adinfty}

When $f_v$ is a coverage function, it is easy to check that, for each $A\subseteq U$,

\begin{equation*}
	\Delta_Af_v(S) =
	\left\lbrace
	\begin{aligned}
		&(-1)^{|A|+1},		~ &|S|=0; \\
		&0,		~ &|S|\geq 1. \\
	\end{aligned}
	\right.
\end{equation*}

Thus, $(-1)^{|A|+1}\Delta_Af_v(S)\geq 0$ for any $|A|\leq |U|$ and this means coverage functions are \adinfty.

\subsubsection{Function (\ref{eq:adk_butnot_adk+1}) is \adk but not AD-($k+1$)}
Based on mathematical deduction, for each $A\subseteq U$,

\begin{equation*}
	\Delta_Ag_v(S) =
	\left\lbrace
	\begin{aligned}
		&\frac{1}{2}(-1)^{|A|+1}+\frac{1}{2k-1}((-1)^{|A|}(|A|-1)),		~ &|S|=0; \\
		&\frac{1}{2k-1}(-1)^{|A|+1},		~ &|S|= 1; \\
		&0,		~ &|S|\geq 2. \\
	\end{aligned}
	\right.
\end{equation*}

Then we have,

\begin{equation*}
	(-1)^{|A|+1}\Delta_Ag_v(S) =
	\left\lbrace
	\begin{aligned}
		&\frac{2k+1-2|A|}{2(2k-1)},		~ &|S|=0; \\
		&\frac{1}{2k-1},		~ &|S|= 1; \\
		&0,		~ &|S|\geq 2. \\
	\end{aligned}
	\right.
\end{equation*}

Thus, $(-1)^{|A|+1}\Delta_Ag_v(S)\geq 0$ if and only if $|A|\leq k$ which means function (\ref{eq:adk_butnot_adk+1}) is \adk, but not AD-($k+1$).

\subsection{\adk for general threshold model}
In the general threshold model, there are two classes of set functions.
One is ``local functions'': the threshold function $f_v$ of each node $v\in V$.
The other is ``global function'' which is the \textbf{spread function $\sigma: 2^V\to \bR$}. Here, $\sigma(S)$ is the expected number of active nodes at the end of a diffusion process from seed set $S$, for any $S\subseteq V$. Next, we extend the definition of AD-$k$ to the general threshold model.

\begin{definition}[Locally AD-$k$ and globally AD-$k$]\label{def:localadkandglobaladk}
	Given an instance of general threshold model \gtins.
	We say $Gt$ is \textbf{locally AD-$k$} if $f_v$ is AD-$k$ for every node $v\in V$ and  $Gt$ is \textbf{globally AD-$k$} if the spread function $\sigma$ of $Gt$ is AD-$k$.
\end{definition}

Combining Definition \ref{def:ADkset} and Definition \ref{def:localadkandglobaladk}, we can restate Theorem \ref{thm:gttotrigger} as follows:
an instance of the general threshold $Gt$ has an equivalent triggering model formulation if $Gt$ is locally \adinfty.
Similar to the conjecture proposed by KKT, we study the relationship between local functions and the global function from the perspective of \adk.
We have the following conjecture:

\textbf{Conjecture.}
Let \gtins be an instance of general threshold model. Given an integer $k\geq 1$, \gtins is globally \adk if \gtins is locally \adk.

In another word, our conjecture is, locally \adk implies globally \adk in the general threshold model for any $k\geq 1$.

In the rest of this paper, we first prove the correctness of our conjecture when the underlying graph of
the general threshold model is a DAG.
In section \ref{sec:localtoglobal_adinfty}, we show our conjecture is true for \adinfty on any graphs.
\section{From locally \adk to globally \adk}\label{sec:localtoglobal_adk}
In this section, we prove the correctness of our conjecture when the underlying graph of the general threshold model is a DAG.
For this purpose, we first analyze the case of layered graphs and then generalize our result from layered graphs to DAGs.

\subsection{From locally \adk\ to globally \adk: layered graph}\label{sec:localtoglobal_adk_layer}
In this section, we prove the correctness of our conjecture on layered graphs.
We first introduce the formal definition of layered graphs:
\begin{definition}[layered graph]\label{def:layergraph}
	A layered graph $G=(V= V_1\cup V_2 \cup \cdots \cup V_m; E)$ is a directed graph with $m$ layers ($m\geq 2$), node set in layer $i$ is exactly $V_i$ for every $i\in [m]$.
	The edge set $E$ of $G$ only contains edges from nodes in layer $i+1$ to nodes in layer $i$, for every $i\in[m-1]$.
\end{definition}

Our main result on layered graphs is presented in Theorem \ref{thm:layeradk}.

\begin{theorem}\label{thm:layeradk}
	Given an instance of general threshold model \gtins in which $G=(V,E)$ is a layered graph, then $Gt$ is globally \adk if it is locally \adk.
\end{theorem}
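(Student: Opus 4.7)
The plan is to lift the problem to the continuous setting via multilinear extensions and proceed by a top-down induction on the layer of each node. Write $\tilde h$ for the multilinear extension of a set function $h:2^V\to\bR$, $\tilde h(\mathbf x)=\sum_{S\subseteq V}h(S)\prod_{v\in S}x_v\prod_{v\notin S}(1-x_v)$. A direct calculation gives $\partial_A\tilde h(\mathbf x)=\mathbb E_{\mathbf X\sim\mathbf x}[\Delta_A h(\mathbf X\setminus A)]$, so $h$ is AD-$k$ if and only if $(-1)^{|A|+1}\partial_A\tilde h\ge 0$ on $[0,1]^V$ for all $|A|\le k$. Hence it suffices to check the sign-alternation of the mixed partial derivatives of $\tilde\sigma$.

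Because $\sigma(S)=\sum_{v\in V}g_v(S)$ with $g_v(S):=\Pr[v\text{ ends active}\mid\text{seeds}=S]$, and AD-$k$ is closed under nonnegative linear combinations, it is enough to prove each $g_v$ is AD-$k$. I would induct on the layer index of $v$ starting from the top. For $v\in V_m$, $g_v(S)=\mathbf 1[v\in S]$ is linear and trivially AD-$k$. For $v\in V_i$ with $i<m$, write
\[
g_v(S)=\mathbf 1[v\in S]+(1-\mathbf 1[v\in S])\,\phi_v(S),\qquad \phi_v(S):=\mathbb E\bigl[f_v(A_v(S))\bigr],
\]
where $A_v(S)\subseteq N(v)$ is the random set of active in-neighbors of $v$. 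A short difference calculation, together with the fact that $\phi_v$ does not depend on $x_v$ (since $v$ has no ancestor equal to itself in a layered graph), shows that $g_v$ is AD-$k$ whenever $\phi_v$ is AD-$k$. So the task reduces to establishing AD-$k$ for each $\phi_v$.

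To handle $\phi_v$, condition on the activation pattern $H$ in the layers strictly above $V_{i+1}$. Because the thresholds are independent uniforms, the activations in $V_{i+1}$ are conditionally independent Bernoullis given $(S,H)$, yielding
\[
\phi_v(S)=\mathbb E_H\bigl[\tilde f_v(\mathbf q(S,H))\bigr],\qquad q_w(S,H)=\mathbf 1[w\in S]+(1-\mathbf 1[w\in S])f_w(H\cap N(w))
\]
for $w\in N(v)$. Expanding $\tilde f_v$ in its M\"obius form $\tilde f_v(\mathbf q)=\sum_B\hat f_v(B)\prod_{w\in B}q_w$, the hypothesis that $f_v$ is AD-$k$ forces $(-1)^{|B|+1}\hat f_v(B)\ge 0$ for every $|B|\le k$. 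Combined with the inductive hypothesis applied to the sub-network above layer $V_i$ (which controls the joint quantities $\mathbb E_H[\prod_{w\in C}q_w(S,H)]$ as set functions of $S$), I would assemble the sign pattern term by term to conclude $(-1)^{|A|+1}\Delta_A\phi_v(S)\ge 0$ for all $|A|\le k$, closing the induction.

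The main obstacle is the sign bookkeeping in the preceding step: the products $\prod_{w\in B}q_w(S,H)$ couple coordinates through shared higher-layer ancestors, so a naive decomposition into independent AD-$k$ pieces fails. My plan is to strengthen the induction hypothesis to track not merely the marginals $g_w$ but certain joint reachability probabilities within each layer, and to exploit the multilinearity of $\tilde f_v$ so that the chain-rule expansion of $\partial_A\tilde\phi_v$ takes a form in which the alternating signs of $\hat f_v(B)$ align with those inherited from the induction.
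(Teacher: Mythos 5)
Your high-level setup (multilinear extension of $\sigma$, reduction to per-node activation probabilities $g_v$, layer-by-layer induction) is sound and the decomposition $g_v(S)=\mathbf 1[v\in S]+(1-\mathbf 1[v\in S])\phi_v(S)$ correctly absorbs seeds in $v$'s own layer, so the task does reduce to showing $\phi_v$ is \adk. However, the step you would use to close the induction has a genuine gap, which you partly acknowledge but do not resolve. The M\"obius expansion $\tilde f_v(\mathbf q)=\sum_B\hat f_v(B)\prod_{w\in B}q_w$ is the wrong tool for \adk: the hypothesis only controls the sign of $\hat f_v(B)$ for $|B|\le k$, while the sum runs over all $B\subseteq N(v)$, and the terms with $|B|>k$ have uncontrolled sign. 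There is no reason that, after applying $\Delta_A$, the contributions from those high-order M\"obius coefficients cancel or align. Separately, even for the controlled terms, the joint quantities $\mathbb E_H\bigl[\prod_{w\in B}q_w(S,H)\bigr]$ couple through shared ancestors; your remedy is to ``strengthen the induction hypothesis to track joint reachability probabilities,'' but you do not say what sign or closure property of those joint probabilities you would carry, nor why it propagates through the layer recursion. Without both of these, the proposal is a plan rather than a proof.

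The paper circumvents both difficulties by a different two-step decomposition. It first restricts seeds to the bottom layer $V_m$; there the nodes of $V_{m-1}$ are conditionally independent Bernoullis given the seed set (Lemma \ref{lem:analysis_formula}), so no joint-reachability coupling ever arises, and the recursion is exactly the compound form $h(S)=\sum_T\prod_v g_v(S)\prod_v(1-g_v(S))f(T)$ handled by Theorem \ref{thm:gtoh}. The heart of that theorem is Lemma \ref{lem:GtoH}, whose explicit formula (Equation (\ref{eq:lpartial})) for $\partial_{x_1}\cdots\partial_{x_\ell}H$ expands into terms $\Delta_{V_P}f(T)\cdot\frac{\partial G_P}{\partial x_P}\cdot(\text{nonnegative factor})$ in which every $\Delta_{V_P}f$ has order $|V_P|\le\ell\le k$ --- precisely the bounded-order bookkeeping that a M\"obius expansion cannot deliver. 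Finally, general seed locations are handled not by a stronger induction but by a graph construction (Lemma \ref{lem:graphequivalent}): the layered graph is duplicated into a taller layered graph so that a seed anywhere in the original corresponds to a bottom-layer seed in the new instance (which remains locally \adk), reducing the general case to the bottom-layer case. If you want to repair your argument, the idea most worth borrowing is the partial-derivative partition expansion of Lemma \ref{lem:GtoH} in place of the M\"obius coefficients, together with a reduction that restores conditional independence layer to layer.
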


The proof of Theorem \ref{thm:layeradk} is shown in Section \ref{sec:seed_bottom} and \ref{sec:seed_anylayer}.
We first restrict that all seeds can only be selected from the \textbf{bottom layer} $V_m$ (Section \ref{sec:seed_bottom}).
Then we extend to the situation that seeds can be selected from all layers (Section \ref{sec:seed_anylayer}).

\subsubsection{Seeds can only be selected from the bottom layer}\label{sec:seed_bottom}

In this section, we restrict Theorem \ref{thm:layeradk} to the case that seeds can only be selected from the bottom layer.
Here is the theorem. 
\begin{theorem}\label{thm:layeradk_bottom}
	Let \gtins be an instance of general threshold model in which $G=(V= V_1\cup V_2 \cup \cdots \cup V_m; E)$ is a layered graph and $f_v$ is \adk for every $v\in V$. Then $P_v(S_m)$ is \adk\ for every $v\in V$ and any seed set $S_m\subseteq V_m$, where $P_v(S_m)$ is the probability that $v$ is active at the end of a diffusion process from $S_m$.
	In other words, $Gt$ is globally \adk if it is locally \adk when seeds can only be selected from the bottom layer $V_m$.
\end{theorem}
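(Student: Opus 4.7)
The plan is to induct on the layer index, starting from $v$'s own layer and moving toward the seed layer $V_m$, with a composition lemma for AD-$k$ functions powering each step. Fix $v \in V_{i_0}$, and for each $i \in \{i_0, i_0+1, \ldots, m\}$ let $G_i : 2^{V_i} \to [0,1]$ send $B$ to the probability that $v$ is eventually active conditioned on the event that the random layer-$i$ active set $A_i$ equals $B$. Then $G_m = P_v$, and the base case $G_{i_0}(B) = \mathbf{1}[v \in B]$ is trivially AD-$k$.

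For the inductive step, I use the Markov structure of a layered graph: given $A_{i+1}=B$, each $u \in V_i$ is active with probability $f_u(B \cap IN(u))$ independently of the other nodes in $V_i$, since $IN(u) \subseteq V_{i+1}$ and all thresholds are drawn independently. Letting $\widetilde{G}_i$ denote the multilinear extension of $G_i$, this conditional product structure gives
$$G_{i+1}(B) \;=\; \mathbb{E}\bigl[\,G_i(A_i)\mid A_{i+1}=B\,\bigr] \;=\; \widetilde{G}_i\bigl((f_u(B \cap IN(u)))_{u \in V_i}\bigr),$$
expressing $G_{i+1}$ as the composition of the multilinear extension of an AD-$k$ set function (by the inductive hypothesis) with a tuple of AD-$k$ coordinate functions (each $f_u(\cdot \cap IN(u))$ inherits AD-$k$ from $f_u$ by a short direct computation).

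The heart of the argument is a composition lemma: if $h : 2^U \to \bR$ is AD-$k$ and $g_1,\ldots,g_{|U|} : 2^W \to [0,1]$ are AD-$k$, then $\phi(S) := \widetilde{h}(g_1(S), \ldots, g_{|U|}(S))$ is AD-$k$ on $2^W$. I prove it by introducing the continuous extension $\widetilde{\phi}(\mathbf{y}) := \widetilde{h}(\widetilde{g}_1(\mathbf{y}),\ldots,\widetilde{g}_{|U|}(\mathbf{y}))$ on $[0,1]^W$, which agrees with $\phi$ on $\{0,1\}^W$, and expanding $\partial^{|A|}\widetilde{\phi}/\partial y_A$ via the multivariate Faà di Bruno formula as a sum indexed by partitions $\pi$ of $A$ together with labelings $\alpha : \pi \to U$. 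Non-injective labelings contribute zero because $\widetilde{h}$ is multilinear; for an injective $\alpha$, the outer-derivative factor carries sign $(-1)^{|\pi|+1}$ (as $h$ is AD-$k$ and $|\pi| \leq |A| \leq k$) and each inner factor $\partial^{|B|}\widetilde{g}_{\alpha(B)}/\partial y_B$ carries sign $(-1)^{|B|+1}$, so the product has sign $(-1)^{|\pi|+1}\cdot(-1)^{|A|+|\pi|}=(-1)^{|A|+1}$. Hence $(-1)^{|A|+1}\partial^{|A|}\widetilde{\phi}/\partial y_A \ge 0$ on $[0,1]^W$ for every $|A| \leq k$. Iterated fundamental theorem of calculus then gives
$$\Delta_A \phi(S) \;=\; \int_{[0,1]^A} \frac{\partial^{|A|} \widetilde{\phi}}{\partial y_A}\bigl(\mathbf{1}_S+\mathbf{t}\bigr)\,d\mathbf{t}$$
for $A \cap S=\emptyset$, transferring the sign $(-1)^{|A|+1}$ from derivatives to set differences and completing the lemma.

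The main obstacle is the sign-counting inside Faà di Bruno: one must check that the $(-1)^{|\pi|+1}$ from the outer derivative exactly cancels the $(-1)^{|\pi|}$ accumulated from shifting each of the $|\pi|$ block-derivatives by $+1$, so that every surviving term in the (a priori enormous) Faà di Bruno sum carries the \emph{partition-independent} sign $(-1)^{|A|+1}$. This cancellation is what allows composition to preserve AD-$k$ at arbitrary order $k$, and it is precisely what propels the layer-by-layer induction $G_{i_0}\to G_{i_0+1}\to\cdots\to G_m = P_v$.
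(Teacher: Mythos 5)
Your proof is correct and follows essentially the same route as the paper: your composition lemma is precisely Theorem~\ref{thm:gtoh}, which the paper likewise proves by passing to continuous extensions, establishing a partition-indexed expansion of the mixed partial derivatives of the composed function (Lemma~\ref{lem:GtoH}, derived there by induction rather than by citing the multivariate Fa\`a di Bruno identity, though it is the same formula), reading off the sign $(-1)^{|\pi|+1}\cdot(-1)^{|A|+|\pi|}=(-1)^{|A|+1}$ per term, and transferring back from partial derivatives to set differences via iterated integration (Lemma~\ref{lem:Htoh}). Your layer-by-layer induction from $v$'s layer to $V_m$ and the Markov product-form conditional probability are the same recursion the paper records as Lemma~\ref{lem:analysis_formula}.
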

To prove Theorem \ref{thm:layeradk_bottom}, we first give the analytical expression of $P_v(S_m)$ by the following lemma.

\begin{lemma}\label{lem:analysis_formula}
	Let \gtins be an instance of general threshold model and $G=(V= V_1\cup V_2 \cup \cdots \cup V_m; E)$ is a layered graph, then for every $v\in V\setminus V_m$, we have
	\begin{equation}\label{eq:analysis_formula}
	\begin{aligned} 
	P_v(S_m)=\sum_{S_{m-1}\subseteq V_{m-1}}P_v(S_{m-1})\prod_{u\in S_{m-1}}f_u(S_{m})\prod_{u\notin S_{m-1}}(1-f_u(S_m)).
	\end{aligned}
	\end{equation}
\end{lemma}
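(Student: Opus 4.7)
The plan is to prove the identity by conditioning on the random subset of $V_{m-1}$ that becomes active at the first time step. Let $A_{m-1}$ denote the set of nodes in $V_{m-1}$ that are active at time $t=1$ in the diffusion started from seeds $S_m$.

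First I would compute the distribution of $A_{m-1}$. At time $t=0$ only $S_m$ is active, and for $u \in V_{m-1}$ we have $IN(u) \subseteq V_m$, so $C_0 \cap IN(u) = S_m \cap IN(u)$; under the convention $f_u(S_m) = f_u(S_m \cap IN(u))$, node $u$ joins $A_{m-1}$ exactly when $\theta_u \leq f_u(S_m)$. Because the thresholds $\{\theta_u\}_{u \in V_{m-1}}$ are independent uniforms on $[0,1]$, the events $\{u \in A_{m-1}\}$ are mutually independent across $u$, each occurring with probability $f_u(S_m)$. Hence
\[
\Pr[A_{m-1} = S_{m-1}] \;=\; \prod_{u \in S_{m-1}} f_u(S_m) \prod_{u \in V_{m-1} \setminus S_{m-1}} \bigl(1 - f_u(S_m)\bigr).
\]

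Next I would establish a Markov-type property: conditional on $A_{m-1} = S_{m-1}$, the probability that $v$ is active at the end of the process equals $P_v(S_{m-1})$. The key observation is the layered structure: edges only go from $V_{i+1}$ to $V_i$, so no node outside $V_{m-1}$ has an in-neighbor in $V_m$, and once $A_{m-1}$ is fixed the evolution in layers $V_1,\ldots,V_{m-2}$ depends on $S_m$ only through $A_{m-1}$. Moreover, the thresholds $\{\theta_w\}_{w \notin V_{m-1}}$ are independent of $\{\theta_u\}_{u \in V_{m-1}}$, so conditioning on $A_{m-1}$ leaves their joint distribution unchanged. Thus the diffusion from time $t \geq 1$ onward is distributionally identical to a fresh diffusion on $G$ started with seed set $S_{m-1}$, which by definition makes $v$ active with probability $P_v(S_{m-1})$. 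For the edge case $v \in V_{m-1}$, this reduces to the indicator of $v \in S_{m-1}$, which is consistent because a node in $V_{m-1}$ not activated at time $1$ can never activate later (its only potential activators lie in $V_m$ and no further $V_m$-node ever becomes active).

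Combining these two pieces by the law of total probability yields
\[
P_v(S_m) \;=\; \sum_{S_{m-1} \subseteq V_{m-1}} \Pr[A_{m-1} = S_{m-1}] \cdot P_v(S_{m-1}),
\]
which, after substituting the product expression for $\Pr[A_{m-1} = S_{m-1}]$, is exactly the claimed identity. The main obstacle is making the second step rigorous: one must carefully argue that conditioning on $A_{m-1} = S_{m-1}$ produces exactly the law of a fresh diffusion from $S_{m-1}$. This relies on the independence of thresholds across layers together with the acyclic layered structure, which prevents any ``feedback'' from $V_m$ to higher layers once step $1$ is complete and so decouples the post-step-$1$ dynamics from the original seed set $S_m$.
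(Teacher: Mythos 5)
Your proof is correct and follows essentially the same approach as the paper: both condition on the realized active set in $V_{m-1}$, use the independence of the thresholds to obtain the product formula for its distribution and to decouple it from the downstream diffusion, and combine via the law of total probability. You additionally spell out the edge case $v \in V_{m-1}$, which the paper sidesteps by assuming without loss of generality that $v \in V_1$.
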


\begin{proof}[Proof of Lemma \ref{lem:analysis_formula}]
	
%
	We prove this lemma by induction.
	
	When $m=2$, $P_v(S_2)=f_v(S_2)$ satisfies Equation (\ref{eq:analysis_formula}).
	When $m>2$,
	let $P_{S_{m-1}}(S_m)$ denote the probability that the active node set in $V_{m-1}$ is exactly $S_{m-1}$ when the seed set is $S_{m}\subseteq V_{m}$.
	Then, for every $S_{m-1}\subseteq V_{m-1}$ and a fix seed set $S_{m}\subseteq V_{m}$, we have $P_{S_{m-1}}(S_{m})=\prod_{u\in S_{m-1}}f_u(S_{m})\prod_{u\notin S_{m-1}}(1-f_u(S_{m}))$ since the threshold value of each node is generated independently.
	
	Given $S_m\subseteq V_m$, $S_{m-1}\subseteq V_{m-1}$ and $v\in V_1$, let $\mathcal{E}_1$ be the random event that the active node set in $V_{m-1}$ is exactly $S_{m-1}$ when the seed set is $S_{m}$ and $\mathcal{E}_2$ be the random event that $v$ can be activated when the active nodes set in $V_{m-1}$ is $S_{m-1}$.
	It is obvious that $\mathcal{E}_1$ and $\mathcal{E}_2$ are two independent random events, thus,
	\begin{equation*}
	\begin{aligned} 
	P_v(S_{m})&=\sum_{S_{m-1}\subseteq V_{m-1}}P_{S_{m-1}}(S_m)P_v(S_{m-1})\\
	&=\sum_{S_{m-1}\subseteq V_{m-1}}\prod_{u\in S_{m-1}}f_u(S_{m})\prod_{u\notin S_{m-1}}(1-f_u(S_{m}))P_v(S_{m-1}).
	\end{aligned}
	\end{equation*}
\end{proof}

Based on Equation (\ref{eq:analysis_formula}), Theorem \ref{thm:layeradk_bottom} holds if we can prove a general conclusion as follows:

\begin{theorem}\label{thm:gtoh}
	Given any two sets $U$ and $V$, a set function $f:2^V\rightarrow [0,1]$ and several set functions $\{g_v\}_{v\in V}: 2^U\rightarrow [0,1]$, let $h:2^U\to\bR$ be a compound set function defined as $h(S)=\sum_{T\subseteq V}\prod_{v\in T}g_v(S )\prod_{v\notin T} (1-g_v(S))f(T)$, for every $S\subseteq U$. Then $h$ is \adk\ if $f$ and $g_v$ are \adk\ for every $v\in V$.
\end{theorem}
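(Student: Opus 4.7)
The plan is to reduce the statement to an analysis of partial-derivative signs via multilinear (continuous) extensions. For any set function $\phi:2^W\to\mathbb{R}$, let $\widetilde\phi(\mathbf{x})=\sum_{T\subseteq W}\phi(T)\prod_{w\in T}x_w\prod_{w\notin T}(1-x_w)$ denote its multilinear extension on $[0,1]^W$. I would define the continuous function
\[
H(\mathbf{y}) \;:=\; \widetilde f\bigl((\widetilde g_v(\mathbf{y}))_{v\in V}\bigr)\qquad (\mathbf{y}\in[0,1]^U).
\]
Since $\widetilde g_v(\mathbf{1}_S)=g_v(S)$, one immediately checks $H(\mathbf{1}_S)=h(S)$, so $H$ is a continuous extension of $h$. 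The strategy is to control the signs of the mixed partial derivatives of $H$ and then transfer this control to the discrete differences of $h$.

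The argument rests on two preliminary facts. (a) For any $\phi:2^W\to\mathbb{R}$ and any distinct $w_1,\ldots,w_\ell\in W$, expanding $\widetilde\phi$ as a polynomial yields
\[
\partial_{w_1}\cdots\partial_{w_\ell}\widetilde\phi(\mathbf{x}) \;=\; \mathbb{E}_{\eta}\bigl[\Delta_{A}\phi(\eta)\bigr],
\]
where $A=\{w_1,\ldots,w_\ell\}$ and $\eta\subseteq W\setminus A$ is drawn with independent Bernoulli marginals $\Pr[w\in\eta]=x_w$. Hence if $\phi$ is \adk, then $(-1)^{\ell+1}\partial_A\widetilde\phi\ge 0$ on $[0,1]^W$ for every $\ell\le k$. (b) For any smooth $\Psi:[0,1]^U\to\mathbb{R}$ with $\psi(S):=\Psi(\mathbf{1}_S)$ and $A\subseteq U$ disjoint from $S$, iterating the fundamental theorem of calculus one coordinate at a time gives
\[
\Delta_A\psi(S)\;=\;\int_{[0,1]^A}\partial_A\Psi\Bigl(\mathbf{1}_S+\sum_{a\in A}t_a\mathbf{e}_a\Bigr)\prod_{a\in A}dt_a,
\]
so pointwise sign control on $\partial_A\Psi$ transfers to $\Delta_A\psi(S)$.

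The main step is the sign analysis of $\partial_A H$ via the multivariate chain rule (Fa\`a di Bruno), which expresses $\partial_A H(\mathbf{y})$ as a sum, over set partitions $\pi$ of $A$ and maps $\mathbf{v}:\pi\to V$, of products
\[
\Bigl(\prod_{B\in\pi}\partial_B\widetilde g_{\mathbf{v}(B)}(\mathbf{y})\Bigr)\cdot\Bigl(\prod_{B\in\pi}\tfrac{\partial}{\partial x_{\mathbf{v}(B)}}\Bigr)\widetilde f\bigl(\widetilde{\mathbf{g}}(\mathbf{y})\bigr).
\]
The key observation is that $\widetilde f$ is multilinear, so $\partial_v^2\widetilde f\equiv 0$; hence only injective $\mathbf{v}$ survive. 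For each surviving term, fact (a) gives that $\partial_B\widetilde g_{\mathbf{v}(B)}$ has sign $(-1)^{|B|+1}$ (using $|B|\le|A|\le k$) and the mixed partial of $\widetilde f$ in the $|\pi|$ distinct variables $\{\mathbf{v}(B):B\in\pi\}$ has sign $(-1)^{|\pi|+1}$ (using $|\pi|\le|A|\le k$). Multiplying,
\[
\prod_{B\in\pi}(-1)^{|B|+1}\cdot(-1)^{|\pi|+1}\;=\;(-1)^{|A|+2|\pi|+1}\;=\;(-1)^{|A|+1},
\]
independent of $\pi$. So every surviving term carries the same sign $(-1)^{|A|+1}$, giving $(-1)^{|A|+1}\partial_A H\ge 0$ on $[0,1]^U$. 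Combining with (b) yields $(-1)^{|A|+1}\Delta_A h(S)\ge 0$ for every $|A|\le k$, i.e., $h$ is \adk.

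I expect the main obstacle to be the Fa\`a di Bruno step. What has to be checked carefully is (i) that multilinearity of $\widetilde f$ really does kill every non-injective block-to-variable assignment, so every surviving mixed partial of $\widetilde f$ is in at most $|A|\le k$ distinct variables; and (ii) that the signs from $\widetilde f$ and from the $\widetilde g_v$'s combine to the partition-independent sign $(-1)^{|A|+1}$. The auxiliary facts (a) and (b) are routine computations directly from the definitions.
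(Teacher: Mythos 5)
Your proposal is correct and follows essentially the same route as the paper: pass to multilinear extensions, show the mixed partials of the composite $H$ carry the uniform sign $(-1)^{|A|+1}$ by decomposing over set partitions of $A$ with blocks assigned injectively to coordinates of $\widetilde f$, then transfer back to discrete differences by integration. The only difference is presentational — you invoke Fa\`a di Bruno and kill non-injective assignments via multilinearity, whereas the paper derives the identical partition-indexed identity (its Equation~(\ref{eq:lpartial})) by a direct induction on $\ell$; your facts (a) and (b) are the paper's Corollary~\ref{coro:gtoG} and Lemma~\ref{lem:Htoh}.
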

If Theorem \ref{thm:gtoh} is true, then Theorem~\ref{thm:layeradk_bottom} follows directly:

\begin{proof}[Proof of Theorem \ref{thm:layeradk_bottom}]
	Given any $k\geq 1$ and a target node $v\in V$, we prove $P_v: 2^{V_m}\to [0,1]$ is \adk if the threshold function $f_u$ is \adk for each $u\in V$.
	Without loss of generality, we suppose $v\in V_1$.
	When $m=2$, then $P_v(S_2)$ is \adk since $P_v(S_2)=f_v(S_2)$ and $f_v(S_2)$ is \adk.
	Suppose $P_v(S_{m-1})$ is \adk, then based on Lemma \ref{lem:analysis_formula} and Theorem \ref{thm:gtoh}, $P_v(S_{m})$ is \adk since Equation (\ref{eq:analysis_formula}) follows the same formula of $h$ defined in Theorem \ref{thm:gtoh}.
\end{proof}

Our goal is to prove Theorem \ref{thm:gtoh} now.
To avoid managing the intractable high-order differences of set functions, we prove Theorem \ref{thm:gtoh} by analyzing partial derivatives of continuous functions since the latter has a more flexible computing approach.
A natural method to connect a set function and a continuous function is constructing extensions of the set function, one famous extension is the multilinear extension (see e.g. \cite{Grabisch2000mobius}) which is defined as follows.

\begin{definition}[Multilinear extension]\label{def:multilinear}
	Given a set function $g: 2^V\rightarrow \bR$, the \textbf{multilinear extension} of $g$ is a continuous function $G:[0,1]^{|V|}\to \bR$ and
	$G(\bx)=\sum_{T\subseteq V}\prod_{v\in T}x_v\prod_{v\notin T} (1-x_v)g(T)$.
\end{definition}

Given a subset $S\subseteq V$, let $\bx_S$ be a $|V|$ dimensional vector satisfying that $x_i=1$ if $i\in S$ and $x_i=0$ if $i\in V\setminus S$.
Then a set function $g$ and its multilinear extension $G$ satisfy that $G(\bx_S)=g(S)$ for every $S\subseteq V$.

Throughout this paper, we use lower cases ($f$, $g$, $h$) to denote set functions and use upper cases ($F$, $G$, $H$) to denote continuous functions.
For the sake of convenience, we also define the \adk property of continuous functions.

\begin{definition}[\adk of continuous function]\label{def:ADKcontinuous}
	Given a continuous function $G:[0,1]^n\rightarrow \bR^+$ and $G$ is differentiable with an arbitrary order at every point, then $G$ is \adk if $\frac{\partial^{\ell} G(\bx)}{\partial{x_{\pi_1}}\partial{x_{\pi_2}}\dots\partial{x_{\pi_\ell}}}\cdot (-1)^{\ell+1}\geq 0$ at any point $\bx\in [0,1]^n$, for any $\ell\leq k$ and $\{\pi_1,\pi_2,\cdots ,\pi_\ell\}\subseteq
	\{1,2,\cdots, n\}$ with $\pi_i\neq \pi_j$ for every $i\neq j$.
\end{definition}

Now we begin the proof of Theorem \ref{thm:gtoh}, we first prove the following results:

\begin{lemma}\label{lem:GtoH}
	Given two sets $U$ and $V$, a set function $f: 2^V\to [0,1]$ and several continuous functions $\{G_v\}_{v\in V}: [0,1]^{|U|}\to \bR^+$, let $H:[0,1]^{|U|}\to \bR^+$ be a continuous function defined as $H(\bx)=\sum_{T\subseteq V}f(T)\prod_{v\in T}G_v(\bx )\prod_{v\notin T} (1-G_v(\bx))$.
	Then $H$ is \adk\ if $f$ is \adk\ and $G_v$ is \adk\ for every $v\in V$.
\end{lemma}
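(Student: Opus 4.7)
The plan is to recognize $H$ as the composition of the multilinear extension of $f$ with the vector-valued map $(G_v)_{v\in V}$. Specifically, letting $F:[0,1]^{|V|}\to \bR$ denote the multilinear extension of $f$ as in Definition~\ref{def:multilinear}, one has $H(\bx)=F(G_{v_1}(\bx),\ldots,G_{v_n}(\bx))$ with $n=|V|$. A standard computation shows that for any $A\subseteq V$ the mixed partial $\partial^{|A|}F/\prod_{v\in A}\partial y_v$ is itself the multilinear extension on $V\setminus A$ of the difference $\Delta_A f$. Because $f$ is \adk, whenever $|A|\leq k$ every value $\Delta_A f(T)$ carries the sign $(-1)^{|A|+1}$, and the multilinear extension of such a function is a convex combination of those values (with weights $\prod y_v\prod(1-y_v)$ that are nonnegative and sum to $1$ on $[0,1]^{V\setminus A}$), so it inherits the same sign. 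Hence $F$ is itself \adk as a continuous function in the sense of Definition~\ref{def:ADKcontinuous}.

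Next I would apply the multivariate Faà di Bruno formula to $H=F\circ G$, writing the $\ell$-th mixed partial with $\ell\leq k$ as
\[
\frac{\partial^\ell H}{\partial x_{\pi_1}\cdots\partial x_{\pi_\ell}}
=\sum_{P}\sum_{\varphi:P\to V}\left(\prod_{B\in P}\frac{\partial^{|B|}G_{\varphi(B)}}{\prod_{\pi\in B}\partial x_\pi}\right)\cdot\frac{\partial^{|P|}F}{\prod_{B\in P}\partial y_{\varphi(B)}},
\]
where $P$ ranges over set partitions of $\{\pi_1,\ldots,\pi_\ell\}$. The key observation is that $F$ is multilinear in every $y_v$, so $\partial^2 F/\partial y_v^2\equiv 0$; this annihilates every term whose $\varphi$ is not injective, allowing me to restrict the sum to injective $\varphi$. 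For those terms the $F$-partial is taken with respect to $|P|$ \emph{distinct} $y$-variables, and since $|P|\leq \ell\leq k$, the \adk property of $F$ just established applies.

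For each surviving term the sign is pinned down. \adk of $G_{\varphi(B)}$ forces $\partial^{|B|}G_{\varphi(B)}/\prod_{\pi\in B}\partial x_\pi$ to have sign $(-1)^{|B|+1}$; multiplying across the blocks of $P$ produces sign $(-1)^{\sum_B(|B|+1)}=(-1)^{\ell+|P|}$, and the $F$-factor contributes $(-1)^{|P|+1}$, so the whole product has sign $(-1)^{\ell+1}$ independent of $P$ and $\varphi$. The sum therefore also has sign $(-1)^{\ell+1}$, which is exactly what Definition~\ref{def:ADKcontinuous} demands of $H$. The main obstacle I anticipate is writing down the Faà di Bruno expansion cleanly and justifying rigorously that the non-injective $\varphi$ vanish via multilinearity of $F$; once this ``multilinearity collapse'' is in place, the sign arithmetic essentially telescopes, since each block of the partition contributes exactly one extra sign flip that precisely cancels against the flip coming from the corresponding iterated difference of $F$.
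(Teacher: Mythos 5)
Your argument is correct and reaches the same key identity, but it arrives there by a noticeably cleaner route than the paper's. The paper proves Lemma~\ref{lem:GtoH} by writing down the explicit formula
\[
\frac{\partial H(\bx)}{\partial x_1\cdots\partial x_\ell}
=\sum_{P\in\mathcal{P}[\ell]}\sum_{V_P\in\mathcal{V}_P}\sum_{T\subseteq V\setminus V_P}\Delta_{V_P}f(T)\,\frac{\partial G_P(\bx)}{\partial x_P}\prod_{w\in T}G_w(\bx)\prod_{w\in V\setminus(T\cup V_P)}(1-G_w(\bx))
\]
and verifying it by a direct, somewhat painful induction on $\ell$; it then reads off the sign of each summand. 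You instead factor $H$ as the composition $H=F\circ(G_v)_{v\in V}$ with $F$ the multilinear extension of $f$, prove that $F$ is itself \adk\ (using the fact that $\partial^{|A|}F/\prod_{v\in A}\partial y_v$ is the multilinear extension of $\Delta_A f$ and that multilinear extensions are nonnegative combinations of function values), and then invoke the multivariate Fa\`a di Bruno formula, using multilinearity of $F$ to kill all terms with a repeated $y$-variable. This buys two things: it makes the proof modular --- the induction in the paper's appendix is replaced by citing a standard chain-rule expansion plus the one-line ``multilinearity collapse'' --- and it inverts the logical order of the paper, since what the paper proves as Corollary~\ref{coro:gtoG} (multilinear extension preserves \adk) becomes the first step of your argument rather than a consequence of Lemma~\ref{lem:GtoH}. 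There is no circularity because you establish that fact directly. Your sign bookkeeping, $(-1)^{\sum_B(|B|+1)}\cdot(-1)^{|P|+1}=(-1)^{\ell+2|P|+1}=(-1)^{\ell+1}$, is identical in substance to the paper's $(-1)^{s+1}\cdot(-1)^{\ell+s}$ with $s=|P|$. The one thing I would tighten in a final write-up is a sentence noting that each block $B$ satisfies $|B|\le\ell\le k$ so that \adk\ of $G_{\varphi(B)}$ actually applies, and that the injective-$\varphi$ restriction makes the surviving $F$-partial a genuine mixed partial in distinct coordinates, which is what Definition~\ref{def:ADKcontinuous} requires.
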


\begin{corollary}\label{coro:gtoG}
	Given a set function $g$, $g$'s multilinear extension $G$ is \adk\ if $g$ is \adk.
\end{corollary}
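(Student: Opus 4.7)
The plan is to directly compute the mixed partial derivatives of $G$ and express each one as a non-negative linear combination of the corresponding higher-order differences of $g$, so that the \adk property of $g$ transfers coefficient-by-coefficient to $G$.

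First, I would observe that $G$ is multilinear, i.e.\ affine in each coordinate $x_i$ separately. So for a single coordinate $i\in V$, grouping terms in the definition of $G(\bx)$ according to whether $i\in T$ or not yields
\[
\frac{\partial G(\bx)}{\partial x_i}
\;=\;\sum_{T\subseteq V\setminus\{i\}}\Bigl(\prod_{v\in T}x_v\Bigr)\Bigl(\prod_{v\in V\setminus(T\cup\{i\})}(1-x_v)\Bigr)\,\Delta_i g(T),
\]
since the coefficient of $x_i$ inside each grouped pair is exactly $g(T\cup\{i\})-g(T)=\Delta_i g(T)$.

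Next, I would iterate this computation: applying the same one-variable argument successively for each $\pi_j$ in a set $A=\{\pi_1,\dots,\pi_\ell\}$, and using that the variables $x_{\pi_1},\dots,x_{\pi_\ell}$ are distinct and each still appears linearly at every stage, I obtain
\[
\frac{\partial^\ell G(\bx)}{\partial x_{\pi_1}\cdots\partial x_{\pi_\ell}}
\;=\;\sum_{T\subseteq V\setminus A}\Bigl(\prod_{v\in T}x_v\Bigr)\Bigl(\prod_{v\in V\setminus(T\cup A)}(1-x_v)\Bigr)\,\Delta_A g(T).
\]
An equivalent and cleaner way to reach the same identity is to combine the iterated difference formula of Definition~\ref{def:difference} with the observation that the multilinear extension of $T\mapsto\Delta_A g(T)$ coincides with this partial derivative; either derivation is essentially bookkeeping.

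Finally, I would finish by sign-tracking. For every $\bx\in[0,1]^{|V|}$, each factor $x_v$ and $(1-x_v)$ in the coefficient lies in $[0,1]$, so the whole coefficient of $\Delta_A g(T)$ is non-negative. Since $g$ is \adk\ and $|A|=\ell\le k$, we have $(-1)^{\ell+1}\Delta_A g(T)\ge 0$ for every $T\subseteq V\setminus A$. Summing non-negative quantities gives $(-1)^{\ell+1}\tfrac{\partial^\ell G(\bx)}{\partial x_{\pi_1}\cdots\partial x_{\pi_\ell}}\ge 0$, which is exactly the defining condition of Definition~\ref{def:ADKcontinuous} for $G$ to be \adk. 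The only mildly delicate step is the bookkeeping behind the iterated partial-derivative formula, but since $G$ is affine in each coordinate separately there is no real obstacle; no auxiliary lemma beyond the definitions is needed.
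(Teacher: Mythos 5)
Your proof is correct, and it takes a genuinely different, more self-contained route than the paper. The paper obtains Corollary~\ref{coro:gtoG} as a one-line specialization of Lemma~\ref{lem:GtoH}: it sets $G_v(\bx)=x_v$ for every $v\in V$, so that the compound function $H$ there becomes exactly the multilinear extension of $f$, and then simply invokes the lemma. You instead rederive from scratch the identity
\[
\frac{\partial^\ell G(\bx)}{\partial x_{\pi_1}\cdots\partial x_{\pi_\ell}}
= \sum_{T\subseteq V\setminus A}\prod_{v\in T}x_v\prod_{v\in V\setminus(T\cup A)}(1-x_v)\,\Delta_A g(T),
\]
i.e.\ that the $\ell$-th mixed partial of the multilinear extension is itself (up to the frozen coordinates in $A$) the multilinear extension of the $\ell$-th order difference $\Delta_A g$, and then sign-track. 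This identity is precisely what Equation~(\ref{eq:lpartial}) collapses to when each $G_v$ is a coordinate projection: since $\partial x_{v_i}/\partial x_{T_i}$ vanishes unless $T_i$ is a singleton, the only surviving partition $P$ is the all-singletons one, forcing $s=\ell$ and $V_P=A$. What your route buys is independence from the much heavier Lemma~\ref{lem:GtoH}, giving an elementary, stand-alone proof; what the paper's route buys is economy, since it must prove Lemma~\ref{lem:GtoH} anyway and the corollary then comes for free. Both are sound.
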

\begin{lemma}\label{lem:Htoh}
	Given a set function $f:2^V\to [0,1]$ and a continuous function $F:[0,1]^{|V|}\to \bR^+$ satisfying that $F(\bx_S)=f(S)$ for every $S\subseteq V$, then $f$ is \adk\ if $F$ is \adk.
\end{lemma}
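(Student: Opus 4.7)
The plan is to prove Lemma \ref{lem:Htoh} by expressing each difference $\Delta_A f(S)$ as an iterated integral of the corresponding mixed partial derivative of $F$, so that the sign conditions on the derivatives transfer directly to sign conditions on the differences.

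First I would handle the trivial case: if $A \cap S \neq \emptyset$, then $\Delta_A f(S) = 0$ as already noted in the paper, so the inequality $(-1)^{|A|+1} \Delta_A f(S) \geq 0$ holds automatically. It therefore suffices to treat subsets $A = \{i_1, \ldots, i_\ell\} \subseteq V$ with $A \cap S = \emptyset$ and $\ell \leq k$.

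The heart of the argument is the following integral representation, which I would establish by induction on $\ell$ using the fundamental theorem of calculus. For a single index $i \notin S$, writing $\bx_S$ and $\bx_{S \cup \{i\}}$ for the indicator vectors, the hypothesis $F(\bx_T) = f(T)$ gives
\begin{equation*}
\Delta_{i} f(S) = F(\bx_{S \cup \{i\}}) - F(\bx_S) = \int_0^1 \frac{\partial F}{\partial x_i}\bigl(\bx_S + t\, e_i\bigr)\, dt.
\end{equation*}
Iterating this identity along $i_1, \ldots, i_\ell$ (using that the order of the difference operators does not matter, as noted after Definition \ref{def:difference}, and that mixed partials commute by the smoothness assumption in Definition \ref{def:ADKcontinuous}) yields
\begin{equation*}
\Delta_A f(S) = \int_{[0,1]^\ell} \frac{\partial^\ell F}{\partial x_{i_1} \cdots \partial x_{i_\ell}}\Bigl(\bx_S + \sum_{j=1}^\ell t_j\, e_{i_j}\Bigr)\, dt_1 \cdots dt_\ell.
\end{equation*}

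With this representation in hand, the conclusion is immediate: since $F$ is \adk, the integrand satisfies $(-1)^{\ell+1}\frac{\partial^\ell F}{\partial x_{i_1} \cdots \partial x_{i_\ell}} \geq 0$ pointwise on $[0,1]^{|V|}$ for every $\ell \leq k$, and integrating a nonnegative function yields a nonnegative result. Hence $(-1)^{|A|+1} \Delta_A f(S) \geq 0$ for every $A$ with $|A| \leq k$ and every $S$, so $f$ is \adk. The only genuinely nontrivial step is the integral representation, but it is really just the fundamental theorem of calculus applied once per coordinate in $A$; after that the proof is essentially a change of viewpoint from combinatorial differences to continuous derivatives.
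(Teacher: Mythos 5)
Your proof is correct and takes essentially the same approach as the paper: both express the $\ell$-th order difference $\Delta_A f(S)$ as the $\ell$-fold iterated integral of the corresponding mixed partial derivative of $F$ over a face of the hypercube, and then read off the sign from the pointwise sign of the integrand. The only cosmetic difference is direction: the paper starts from the iterated integral and peels it down via FTC to the iterated evaluation $F(\bx)|_{x_1=0}^1\cdots|_{x_\ell=0}^1 = \Delta_{\{1,\ldots,\ell\}}f(S)$, whereas you build up from $\Delta_i f(S)$ to the full integral representation by induction — same computation, opposite starting point.
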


Now we prove the above three results. Before proving Lemma \ref{lem:GtoH}, we show an analytical expression for $\frac{\partial^{\ell} H(\bx)}{\partial{x_1}\partial{x_2}\dots\partial{x_\ell}}$ as Equation (\ref{eq:lpartial}), for every $\ell\leq k$:

\begin{equation}\label{eq:lpartial}
	\begin{aligned}
		&\frac{\partial^{\ell} H(\bx)}{\partial{x_1}\partial{x_2}\dots\partial{x_\ell}}\\
		=&\sum_{P\in \mathcal{P}[\ell]}\sum_{V_P\in \mathcal{V}_P}\sum_{T\subseteq V\setminus V_P}\Delta_{V_P}f(T)\frac{\partial G_{P}(\bx)}{\partial x_{P}}\prod_{w\in T}G_w(\bx)\prod_{w\in V\setminus (T\cup V_P)}(1-G_w(\bx)).
	\end{aligned}
\end{equation}

Admittedly, Equation (\ref{eq:lpartial}) is a very involved formula.
Even though we already express it in a neat way,
there are still many notations in (\ref{eq:lpartial})	need to be clearly defined:
\begin{enumerate}[$\bullet$]
	\item $\mathcal{P}[\ell]$ denotes the set of partitions of $\{1,2,\cdots \ell\}$.
	Specifically, a partition $P=(T_1,T_2,\cdots,T_s)\in \mathcal{P}[\ell]$ means that $T_1,T_2,\cdots,T_s$ is a partition of $\{1,2,\cdots \ell\}$, that is, $T_i\cap T_j=\emptyset$ for every $i\neq j$ and $\cup_{i\in \{1,2,\cdots,s\}}T_i=\{1,2, \cdots,\ell\}$.
	\item Given a partition $P=(T_1,T_2,\cdots,T_s)\in \mathcal{P}[\ell]$, then $\mathcal{V}_P=\{V_P:V_P\subseteq V, |V_P|=s\}$ is the collection of all subsets of $V$ with size $s$.
	
	\item Given a partition $P=(T_1,T_2,\cdots,T_s)\in \mathcal{P}[\ell]$ and a subset $V_P=\{v_1,v_2,\cdots,v_s\}$ $\in \mathcal{V}_P$, 
	\begin{center}
		$\frac{\partial G_{P}(\bx)}{\partial x_{P}}=\prod_{i=1}^s\frac{\partial G_{v_i}(\bx)}{\partial x_{T_i}}$,
	\end{center}
	where
	$\partial x_{T_i}=\partial y_1\partial y_2\cdots\partial y_{|T_i|}$ if $T_i=\{y_1,y_2,\cdots,y_{|T_i|}\}$, for every $i\in \{1,2,\cdots, s\}$.
\end{enumerate}

Now we prove Equation (\ref{eq:lpartial}).

\begin{proof}[Proof of Equantion (\ref{eq:lpartial})]
	We prove (\ref{eq:lpartial}) by induction.
	
	When $\ell=1$, by the definition of partial derivative, we have,
	\begin{equation*}
		\begin{aligned}
			&\frac{\partial H(\bx)}{\partial x_1}\\
			&=\sum_{T\subseteq V}f(T)[\frac{\partial \prod_{v\in T}G_v(\bx )}{\partial x_1}\prod_{v\notin T} (1-G_v(\bx))+\prod_{v\in T}G_v(\bx )\frac{\partial \prod_{v\notin T} (1-G_v(\bx))}{\partial x_1}]\\
			&=\sum_{T\subseteq V}\sum_{v\in T}f(T)\frac{\partial G_v(\bx)}{\partial x_1}\prod_{u\in T\setminus\{v\}}G_u(\bx)\prod_{w\notin T} (1-G_w(\bx))\\
			&~~~~-\sum_{T\subseteq V}\sum_{v\notin T}f(T)\frac{\partial G_v(\bx)}{\partial x_1}\prod_{u\notin T\cup\{v\}}(1-G_u(\bx))\prod_{w\in T} G_w(\bx)\\
			&=\sum_{v\in V}\sum_{T\subseteq V\setminus\{v\}}f(T\cup\{v\})\frac{\partial G_v(\bx)}{\partial x_1}\prod_{u\in T}G_u(\bx)\prod_{w\notin T\cup\{v\}} (1-G_w(\bx))\\
			&~~~~-\sum_{v\in V}\sum_{T\subseteq V\setminus\{v\}}f(T)\frac{\partial G_v(\bx)}{\partial x_1}\prod_{u\notin T\cup\{v\}}(1-G_u(\bx))\prod_{w\in T} G_w(\bx)\\
			&=\sum_{v\in V}\sum_{T\subseteq V\setminus\{v\}}[f(T\cup\{v\})-f(T)]\frac{\partial G_v(\bx)}{\partial x_1}\prod_{u\in T}G_u(\bx)\prod_{w\notin T\cup\{v\}} (1-G_w(\bx))\\
			&=\sum_{v\in V}\sum_{T\subseteq V\setminus\{v\}}\Delta_vf(T)\frac{\partial G_v(\bx)}{\partial x_1}\prod_{u\in T}G_u(\bx)\prod_{w\notin T\cup\{v\}} (1-G_w(\bx)).\\
		\end{aligned}
	\end{equation*}
	
	Thus, $H$'s first partial derivative satisfies equation (\ref{eq:lpartial}).
	Suppose $H$'s $\ell-1$-th partial derivative satisfies equation (\ref{eq:lpartial}), we can calculate $H$'s $\ell$-th partial derivative as following.
	\begin{equation}\label{eq:lpartialderivate1}
		\begin{aligned}
			&\frac{\partial^{\ell} H(\bx)}{\partial{x_1}\partial{x_2}\dots\partial{x_\ell}}\\
			=&\sum_{P\in\mathcal{P}[\ell-1]}\sum_{V_P\in\mathcal{V}_P}\sum_{T\subseteq V\setminus V_P}
			[\Delta_{V_P}f(T)\cdot\\
			&\frac{\partial(\frac{\partial G_{P}(\bx)}{\partial x_{P}}\prod_{w\in T}G_w(\bx)\prod_{w\in V\setminus (T\cup V_P)}(1-G_w(\bx)))}{\partial x_\ell}].
		\end{aligned}
	\end{equation}
	
	Based on the formula of computing partial derivative of a continuous function, given a partition $P=(T_1,T_2,\cdots,T_s)\in \mathcal{P}[\ell-1]$ , a subset $T\subseteq V\setminus V_P$ and a subset $V_P=\{v_1,v_2,\cdots,v_s\}\in \mathcal{V}_P$,
	\begin{equation}\label{eq:lpartialderivate2}
		\begin{aligned} 
			&\frac{\partial(\frac{\partial G_{P}(\bx)}{\partial x_{P}}\prod_{w\in T}G_w(\bx)\prod_{w\in V\setminus V_P}(1-G_w(\bx)))}{\partial x_\ell}\\
			=&[\sum_{i:v_i\in V_P}\frac{\partial G_{v_i}(\bx)}{\partial x_{T_i\cup\{v_\ell\}}}\prod_{j:v_j\in V_P\setminus \{v_i\}}\frac{\partial G_{v_j}(\bx)}{\partial x_{T_j}} \prod_{w\in T}G_w(\bx)\prod_{w\in V\setminus (T\cup V_P)}(1-G_w(\bx))\\
			&+\prod_{j:v_j\in V_P}\frac{\partial G_{v_j}(\bx)}{\partial x_{T_j}}\sum_{v\in T}\frac{\partial G_v(\bx)}{\partial x_\ell}\prod_{u\in T\setminus\{v\}}G_u(\bx)\prod_{w\in V\setminus(T\cup V_P)}(1-G_w(\bx))\\	
			&-\prod_{j:v_j\in V_P}\frac{\partial G_{v_j}(\bx)}{\partial x_{T_j}}\prod_{u\in T}G_u(\bx)\sum_{v\in V\setminus (T\cup V_P)}\frac{\partial G_v(\bx)}{\partial x_\ell}\prod_{w\in V\setminus(T\cup V_P\cup\{v\})}(1-G_w(\bx))].
		\end{aligned}	
	\end{equation}	
	
	Notice that the elements in the latter two terms of the above formula are very similar. Now we take into the summation notations in  (\ref{eq:lpartialderivate2}) and convert the order of them as follows:
	\begin{equation}\label{eq:lpartialderivate3}
		\begin{aligned}
		&\sum_{P\in\mathcal{P}[\ell-1]}\sum_{V_P\in\mathcal{V}_P}\sum_{T\subseteq V\setminus V_P}
			\Delta_{V_P}f(T)\cdot\\
			&[\prod_{j:v_j\in V_P}\frac{\partial G_{v_j}(\bx)}{\partial x_{T_j}}\sum_{v\in T}\frac{\partial G_v(\bx)}{\partial x_\ell}\prod_{u\in T\setminus\{v\}}G_u(\bx)\prod_{w\in V\setminus(T\cup V_P)}(1-G_w(\bx))\\	
			&-\prod_{j:v_j\in V_P}\frac{\partial G_{v_j}(\bx)}{\partial x_{T_j}}\prod_{u\in T}G_u(\bx)\sum_{v\in V\setminus (T\cup V_P)}\frac{\partial G_v(\bx)}{\partial x_\ell}\prod_{w\in V\setminus(T\cup V_P\cup\{v\})}(1-G_w(\bx))]\\
			=&\sum_{P\in\mathcal{P}[\ell-1]}\sum_{V_P\in\mathcal{V}_P}\sum_{v\in V\setminus V_P}\sum_{T\subseteq V\setminus (V_P\cup \{v\})}[\Delta_{V_P}f(T\cup\{v\})\cdot\\
			&\prod_{j: v_j\in V_P}\frac{\partial G_{v_j}(\bx)}{\partial x_{T_j}}\frac{\partial G_v(\bx)}{\partial x_\ell}\prod_{u\in T}G_u(\bx)\prod_{w\in V\setminus(T\cup V_P\cup\{v\})}(1-G_w(\bx))\\	
			&-\Delta_{V_P}f(T)\cdot\prod_{j:v_j\in V_P}\frac{\partial G_{v_j}(\bx)}{\partial x_{T_j}}\frac{\partial G_v(\bx)}{\partial x_\ell}\prod_{u\in T}G_u(\bx)\prod_{w\in V\setminus(T\cup V_P\cup\{v\})}(1-G_w(\bx))].\\	
		\end{aligned}
	\end{equation}
Combing (\ref{eq:lpartialderivate2}) and (\ref{eq:lpartialderivate3}), we can expand Equation (\ref{eq:lpartialderivate1}) into the following form:
 	
	\begin{equation*}
		\begin{aligned} 
			\setlength{\arraycolsep}{10pt}
			&\frac{\partial^{\ell} H(\bx)}{\partial{x_1}\partial{x_2}\dots\partial{x_\ell}}\\
			=&\sum_{P\in\mathcal{P}[\ell-1]}\sum_{V_P\in\mathcal{V}_P}\sum_{T\subseteq V\setminus V_P}\Delta_{V_P}f(T)\cdot\\
			&\sum_{i:v_i\in V_P}\frac{\partial G_{v_i}(\bx)}{\partial x_{T_i\cup\{v_\ell\}}}\prod_{j:v_j\in V_P\setminus \{v_i\}}\frac{\partial G_{v_j}(\bx)}{\partial x_{T_j}} \prod_{w\in T}G_w(\bx)\prod_{w\in V\setminus (T\cup V_P)}(1-G_w(\bx))\\
			&+\sum_{P\in\mathcal{P}[\ell-1]}\sum_{V_P\in\mathcal{V}_P}\sum_{v\in V\setminus V_P}\sum_{T\subseteq V\setminus (V_P\cup \{v\})}\Delta_{V_P\cup \{v\}}f(T)\cdot\\
			&\prod_{j: v_j\in V_P}\frac{\partial G_{v_j}(\bx)}{\partial x_{T_j}}\frac{\partial G_v(\bx)}{\partial x_\ell}\prod_{u\in T}G_u(\bx)\prod_{w\in V\setminus(T\cup V_P\cup\{v\})}(1-G_w(\bx))\\
			=&\sum_{P\in \mathcal{P}[\ell]}\sum_{V_P\in \mathcal{V}_P}\sum_{T\subseteq V\setminus V_P}\Delta_{V_P}f(T)\frac{\partial G_{P}(\bx)}{\partial x_{P}}\prod_{w\in T}G_w(\bx)\prod_{w\in V\setminus V_P}(1-G_w(\bx)).
		\end{aligned}	
	\end{equation*}
\end{proof}

Having Equation (\ref{eq:lpartial}), we prove Lemma \ref{lem:GtoH}, Corollary \ref{coro:gtoG} and Lemma \ref{lem:Htoh} one by one.

\begin{proof}[Proof of Lemma \ref{lem:GtoH}]	
	Given a function $f$, let  $Sgn(f)$ be the sign of $f$.
	We focus on $Sgn(\frac{\partial^{\ell} H(\bx)}{\partial{x_1}\partial{x_2}\dots\partial{x_\ell}})$ based on Equation (\ref{eq:lpartial}).
		
	Given a partition $P=(T_1,T_2,\cdots,T_s)\in \mathcal{P}[\ell]$ , a subset $V_P=\{v_1,v_2,\cdots,v_s\}\in \mathcal{V}_P$ and a subset $T\subseteq V\setminus V_P$, if $f$ is \adk\ and $G_v$ is \adk\ for every $v\in V$, then $Sgn(\Delta_{V_P}f(T))=(-1)^{s+1}$.
	Moreover,
	\begin{center}
		 $Sgn(\frac{\partial G_{P}(\bx)}{\partial x_{P}})=\prod_{i=1}^s (-1)^{|T_i|+1}=(-1)^{\sum_{i=1}^s(|T_i|+1)}=(-1)^{\ell+s}$,
	\end{center}
	  the last equation holds since $\cup_{i\in \{1,2,\cdots,s\}}T_i=\{1,2, \cdots,\ell\}$ and $T_i\cap T_j=\emptyset$ for every $i\neq j$.
	  
	Combining
	$\prod_{w\in T}G_w(\bx)\prod_{w\in V\setminus V_P}(1-G_w(\bx))\geq 0$,
 we have,
	 \begin{center}
		$Sgn(\frac{\partial^{\ell} H(\bx)}{\partial{x_1}\partial{x_2}\dots\partial{x_\ell}})=(-1)^{s+1+\ell+s}=(-1)^{\ell+1}$.
	\end{center} 
	The above analysis implies that Lemma \ref{lem:GtoH} holds if Equation (\ref{eq:lpartial}) holds.
\end{proof}

\begin{proof}[Proof of Corollary \ref{coro:gtoG}]	
	In Lemma \ref{lem:GtoH}, if we let $G_v(\bx)=x_v$ for every $v\in V$, then $H(\bx)=\sum_{T\subseteq V}\prod_{v\in T}x_v\prod_{v\notin T} (1-x_v)f(T)$.
	In this case, $H$ is the multilinear extension of $f$, Corollary \ref{coro:gtoG} can be deduced directly.
\end{proof}

\begin{proof}[Proof of Lemma \ref{lem:Htoh}]	
	Given an integer $\ell\leq k$, for each $\textbf{x}\in [0,1]^n$, we have  $Sgn(\frac{\partial^{\ell} F(\bx)}{\partial{x_\ell}\partial{x_{\ell-1}}\dots\partial{x_{1}}})=(-1)^{\ell+1}$ since $F$ is \adk. Now we consider the $\ell$-th integral of  $F$'s $\ell$-th partial derivative as follows:
	\begin{equation*}
	\begin{aligned} 
	& \int_{x_\ell=0}^{1}\int_{x_{\ell-1}=0}^{1}\dots\int_{x_1=0}^{1}
	\frac{\partial^{\ell} F(\bx)}{\partial{x_\ell}\partial{x_{\ell-1}}\dots\partial{x_{1}}}
	dx_1dx_2\dots dx_{\ell} \\
	=& \int_{x_\ell=0}^{1}\int_{x_{\ell-1}=0}^{1}\dots\int_{x_2=0}^{1}
	\frac{\partial^{\ell} F(\bx)}{\partial{x_\ell}\partial{x_{\ell-1}}\dots\partial{x_{2}}}|_{x_1=0}^1
	dx_2\dots dx_{\ell} \\
	=& \int_{x_\ell=0}^{1}\int_{x_{\ell-1}=0}^{1}\dots\int_{x_3=0}^{1}
	\frac{\partial^{\ell} F(\bx)}{\partial{x_\ell}\partial{x_{\ell-1}}\dots\partial{x_{3}}}|_{x_1=0}^1|_{x_2=0}^1
	dx_3\dots dx_{\ell} \\
	\vdots\\
	=&\int_{x_\ell=0}^{1}\frac{\partial F(\bx)}{\partial{x_\ell}}|_{x_1=0}^1|_{x_2=0}^1\cdots|_{x_{\ell-1}=0}^1dx_\ell\\
	=&F(\bx)|_{x_1=0}^1|_{x_2=0}^1\cdots|_{x_\ell=0}^1.\\
	\end{aligned}
	\end{equation*}
	
	Given any $S\subseteq V$ with $|V\setminus S|\geq k\geq \ell$, without loss of generality, we suppose
	$\{1,2\cdots, \ell\}\subseteq V\setminus S$.
	Let $\mathcal{X}(S,\ell)\triangleq \{\bx: x_i=1$ for every $i\in S$ and $x_i=0$ for every $i\in V\setminus \{S\cup\{1,2,\cdots,\ell\}\}$.
	Thus, for every $\bx\in \mathcal{X}(S,\ell)$, $F(\bx)|_{x_1=0}^1|_{x_2=0}^1\cdots|_{x_\ell=0}^1=\Delta_{\ell}\Delta_{\ell-1}\dots\Delta_{1}f(S)$ since $F(\bx)=f(S)$ when $\bx=\bx_S$.
	Hence, the sign of $f$'s $\ell$-th order difference is the same as $F$'s $\ell$-th partial derivative for every $\ell \leq k$. Formally, given any $A$ and $S\subseteq V$ with $|A|=\ell$, for all  $\bx\in \mathcal{X}(S,\ell)$: 
	\begin{equation*}
	\begin{aligned} 
		 Sgn(\Delta_Af(S))
		 &=Sgn(F(\bx)|_{x_1=0}^1|_{x_2=0}^1\cdots|_{x_\ell=0}^1)\\
		 &=Sgn(\int_{x_\ell=0}^{1}\int_{x_{\ell-1}=0}^{1}\dots\int_{x_1=0}^{1}
		\frac{\partial^{\ell} F(\bx)}{\partial{x_\ell}\partial{x_{\ell-1}}\dots\partial{x_{1}}}
		dx_1dx_2\dots dx_{\ell})\\
		&=Sgn(\frac{\partial^{\ell} F(\bx)}{\partial{x_\ell}\partial{x_{\ell-1}}\dots\partial{x_{1}}})\\
		&=(-1)^{\ell+1}.
	\end{aligned}
    \end{equation*}
	The proof holds.
\end{proof}

\begin{proof}[Proof of Theorem \ref{thm:gtoh}]
	Figure \ref{fig:proofsketch} is a sketch graph of this proof.
	
	\begin{figure}
		\centering
		\includegraphics[scale=0.4]{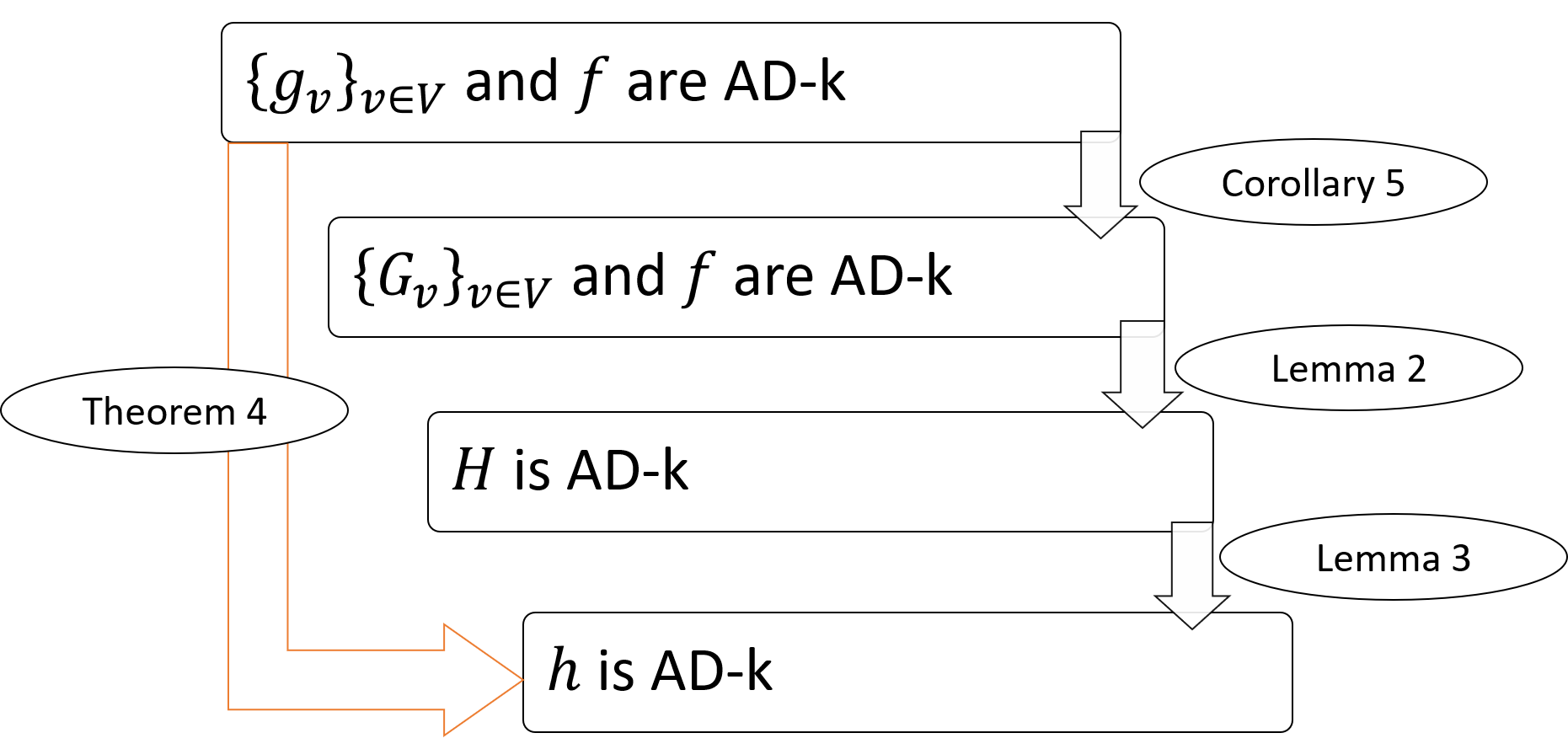}
		\caption{Proof sketch of Theorem \ref{thm:gtoh}}\label{fig:proofsketch}
	\end{figure}
	
	Given the equation $h(S)=\sum_{T\subseteq V}\prod_{v\in T}g_v(S )\prod_{v\notin T} (1-g_v(S))f(T)$ defined in Theorem \ref{thm:gtoh}, our goal is to show that the \adk property of $\{g_v\}_{v\in V}$ and $f$ can imply the \adk property of $h$.
	For this purpose, we use some continuous functions as a bridge.
	These continuous functions are $\{G_v\}_{v\in V}$ and $H$ in which $G_v$ is the multilinear extension of $g_v$ for every $v\in V$ and $H(\bx)=\sum_{T\subseteq V}\prod_{v\in T}G_v(\bx )\prod_{v\notin T} (1-G_v(\bx))f(T)$.
	
	Firstly, we can show that if $g_v$ is \adk for every $v\in V$, then $G_v$ is \adk for every $v\in V$ (Corollary \ref{coro:gtoG}).
	Secondly,  we prove that $H$ is \adk if $f$ and $\{G_v\}_{v\in V}$ are all \adk (Lemma \ref{lem:GtoH}).
	It remains to show that $h$ is \adk if $H$ is \adk, this result can be deduced from Lemma \ref{lem:Htoh} since $h(S)=H(\bx_S)$ for every $S\subseteq V$.
\end{proof}	

\subsubsection{Seeds can be selected from all layers}\label{sec:seed_anylayer}
In Section \ref{sec:seed_bottom}, we restrict that all seeds must be selected from the bottom layer.
In this section, we extend the result to the general case in which seeds can be selected from any layer, and this completes the proof of Theorem \ref{thm:layeradk}.
The main result in this section is shown in Lemma~\ref{lem:graphequivalent}.

\begin{lemma}\label{lem:graphequivalent}
	Suppose \gtins is an instance of the general threshold model defined on a layered graph with $V=V_1\cup V_2\cdots\cup V_m$, then there exists another instance of the general threshold model $Gt'=(V',E',\{\hat{f}_v\}_{v\in V'})$ with $V'=V'_1\cup V'_2\cdots\cup V'_m$ satisfying that:
	\begin{enumerate}[(i)]
		\item $G'=(V',E')$ is a layered graph and the node set in the bottom layer of $V'$ is $V'_m=V$.
		\item $Gt'$ is locally \adk if $Gt$ is locally \adk, for every $k\geq 0$.
		\item for every $S\subseteq V$, let $S'_m$ be the copy set of $S$ in $V'_m$, then there exists a subset $T\subseteq V'$ such that $\sigma(S)=\sum_{u\in V}P_u(S)=\sum_{u\in T}P'_u(S'_m)$. Where $P_u(S)$ and $P'_u(S'_m)$ denote the  probabilities that $u$ becomes active in $Gt$ and $Gt'$ from seed set $S$ and $S'_m$, respectively.
	\end{enumerate}
\end{lemma}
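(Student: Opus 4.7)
The plan is to reduce the general-seed case to the bottom-layer-only case (Theorem~\ref{thm:layeradk_bottom}) by building $Gt'$ in which each original node $v\in V_j$ is represented by a tower of copies $v^{(j)},v^{(j+1)},\ldots,v^{(m)}$ with $v^{(i)}\in V'_i$. I would add chain edges $v^{(i+1)}\to v^{(i)}$ for $j\le i<m$ and lift each original edge $u\to v$ (with $u\in V_{j+1}$, $v\in V_j$) to $u^{(j+1)}\to v^{(j)}$. This makes $G'$ layered, and $V'_m=\{v^{(m)}:v\in V\}$ is canonically identified with $V$, giving~(i). For the threshold functions, set $\hat{f}_{v^{(m)}}\equiv 0$, set $\hat{f}_{v^{(i)}}(T)=\mathbf{1}[v^{(i+1)}\in T]$ for chain nodes $j<i<m$, and for the top copy $v^{(j)}$ with $j<m$ define
\[
\hat{f}_{v^{(j)}}(T)\;=\;\begin{cases}1 & \text{if } v^{(j+1)}\in T,\\ f_v(T') & \text{if } v^{(j+1)}\notin T,\end{cases}
\]
where $T'=\{u\in IN(v):u^{(j+1)}\in T\}$ records the original in-neighbors of $v$ whose layer-$(j+1)$ copies lie in $T$.

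For (ii), the chain and bottom thresholds are \adk for every $k$, since their only nonzero difference is a first-order one equal to $1$. For the top copy, I would split $\Delta_A\hat{f}_{v^{(j)}}(S)$ according to whether $v^{(j+1)}\in A$. If $v^{(j+1)}\notin A$, the sum reduces to $\Delta_{A'}f_v(S')$ after projecting $A$ onto original in-neighbors of $v$ (or to $0$ if $A$ contains any copy that is not such an in-neighbor). If $v^{(j+1)}\in A$, splitting $B\subseteq A$ according to whether $v^{(j+1)}\in B$ gives $-\Delta_{A\setminus\{v^{(j+1)}\}}f_v(S')$ when $|A|\ge 2$ (the ``chain branch'' sum telescopes to zero by an alternating-sum identity) and $1-f_v(S')\ge 0$ when $|A|=1$. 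In every subcase the sign is $(-1)^{|A|+1}$ by \adk of $f_v$.

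For (iii), take $T=\{v^{(j(v))}:v\in V\}$, where $j(v)$ is $v$'s original layer. I would prove by induction on $j$ from $m$ down to $1$, under the natural coupling that uses the same uniform threshold $\theta_v$ for $v$ in $Gt$ and for $v^{(j(v))}$ in $Gt'$ (and arbitrary independent thresholds elsewhere), that $v^{(j)}$ is active in $Gt'$ from $S'_m$ iff $v$ is active in $Gt$ from $S$. The base $j=m$ is immediate. For the inductive step with $j<m$, the chain forces $v^{(j+1)}$ to be active in $Gt'$ exactly when $v\in S$; by induction, for each $u\in IN(v)$ the copy $u^{(j+1)}$ is active in $Gt'$ iff $u$ is active in $Gt$; and then the two-branch definition of $\hat{f}_{v^{(j)}}$ reproduces ``$v$ is seeded or $v$ activates naturally via $f_v$'', matching $Gt$. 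Summing over $v\in V$ yields $\sigma(S)=\sum_{u\in T}P'_u(S'_m)$. The main obstacle is the sign check in (ii): the top-copy threshold mixes a deterministic chain-forced activation with the random rule $f_v$, and one has to track carefully how the extra ``$1$'' contributed by the chain branch interacts with the alternating signs in $\Delta_A$ so that everything cancels to leave exactly $-\Delta_{A\setminus\{v^{(j+1)}\}}f_v(S')$ with the correct parity.
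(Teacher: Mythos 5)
Your construction (tower of copies $v^{(j)},\dots,v^{(m)}$ with chain edges, lifted original edges between top copies, and the two-branch threshold $\hat f_{v^{(j)}}$), the case split on whether $v^{(j+1)}\in A$ for the sign analysis in (ii), and the identification $T=\{v^{(j(v))}:v\in V\}$ for (iii) all match the paper's argument exactly, up to notation. The only cosmetic difference is in (iii): the paper appeals to Kempe et al.'s invariance under waiting times to delay the non-seed copies by $m$ steps, whereas you establish the same conclusion by a direct coupling of thresholds and a layer-by-layer induction, and both are sound.
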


\begin{proof}[Proof of Lemma \ref{lem:graphequivalent}]
	Our proof of this lemma is constructive.
	\begin{figure}
		\centering
		\includegraphics[scale=0.4]{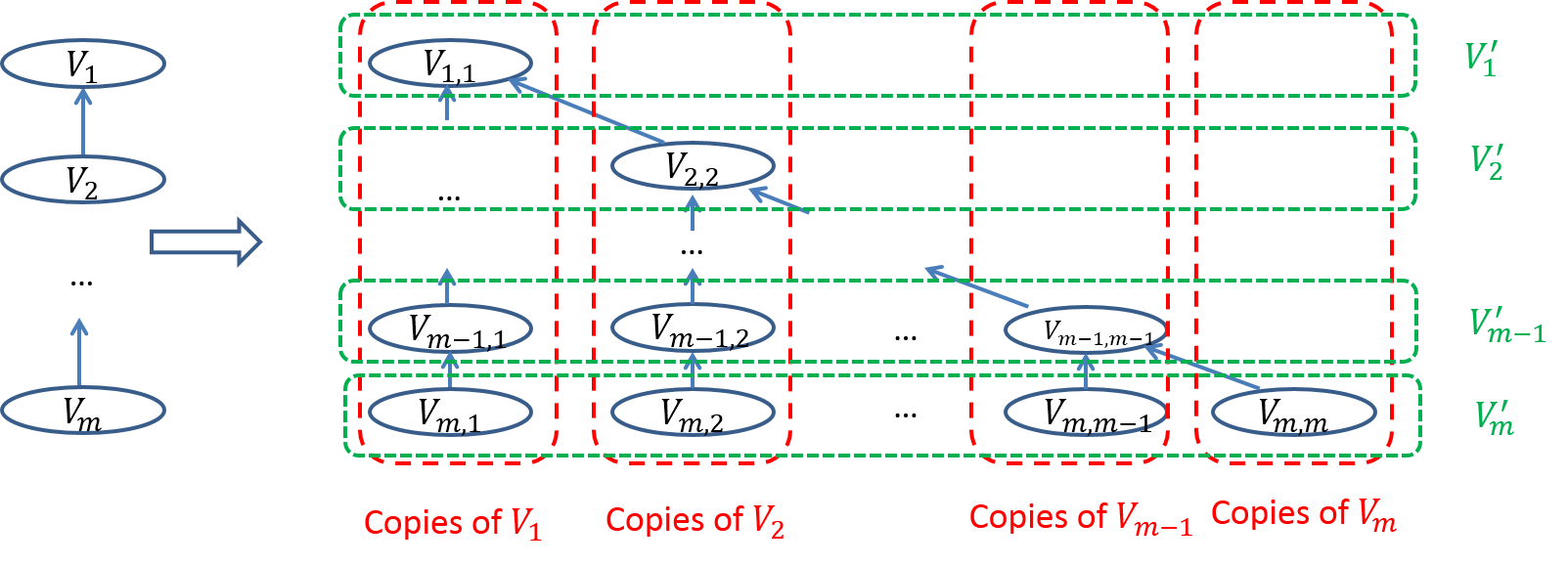}
		\caption{Transform a layered graph to a new layered graph}\label{fig:bottom2any}
	\end{figure}
	\begin{enumerate}[(i)]
		\item
		We first show the construction of the social graph.
		Given a layered graph $G=(V, E)$ with $V=V_1\cup V_2,\cdots, V_m$,
		we construct another layered graph $G'=(V',E')$ as follows (also see the illustration in Figure \ref{fig:bottom2any}):
		
		for every $i\in [m]$, we make $m-i+1$ copies for $V_i$ (the $i$-th column in Figure \ref{fig:bottom2any}).
		Let $V'=V_1'\cup V_2'\cdots \cup V_m'$ and $V_i'=V_{i,1}\cup V_{i,2}\cdots, \cup V_{i,i}$ (the $i$-th row in Figure \ref{fig:bottom2any}), where $V_{i,j}$ is a copy of $V_j$ in $G$, for every $1\leq j\leq i$.
		Thus, $V=V'_m$.
		
		Now we construct $E'$ based on $E$.
		$E'$ contains two classes of edges,  named as ``inner edge'' ($IE$) and ``outer edge'' ($OE$).
		Specifically, $IE$ represents edges between copies and $OE$ corresponds to edges between different layers in $G$.
		More formally, $IE=\{(v_{i,j,k},v_{i-1,j,k}): 2\leq i\leq m, 1\leq j<i, 1\leq k \leq |V_j|, v_{i,j,k}\in V_{i,j}\}$.
		That is, there is an edge  $(u,v)\in IE$ if $u$ locates at the  next layer of $v$ in $G$, moreover,  $u$ and $v$ are copies of the same node in $V$.
		The other edge class $OE=\{(v_{i,i,k},v_{i-1,i-1,q}): 2\leq i\leq m, 1\leq k \leq |V_i|,1\leq q \leq |V_{i-1}| , v_{i,i,k}\in V_{i,i}, (v_{i,k}, v_{i-1,q})\in E\}$, where $v_{i,j}$ is a node in $V_i$ in $G$ for every $1\leq i$ and $1\leq j\leq |V_i|$.
		That is, $OE$ copies edges in $E$ and thus graph $G''=(V_{1,1}\cup V_{2,2}\cdots \cup{V_{m,m}}, OE)$ is exactly the original graph $G$.
		Thus, under the above construction, in the new layered graph $G'$, the node set of the bottom layer of $G'$ is exactly $V$.
		
		\item
		In this part, we construct threshold functions of $Gt'$ such that $Gt'$ is locally \adk if $Gt$ is locally \adk.
		Given any node $v'_i\in V'_i$ ($\forall i\in [m-1]$) and $S'_{i+1}\subseteq V'_{i+1}$, we need to determine the threshold function $\hat{f}_{v'_i}(S'_{i+1})$. Let $u'$ be the node in $V'_{i+1}$ such that $(u',v'_i)\in IE$, that is, node $u'$ is directly under node $v'_i$. Suppose $v'_i\in V_{i,j}$ which is a copy set of $V_j$ in the original graph $G$ and let $v_j\in V_j$ be the original node of $v'_i$ in $G$. Let $S_{i+1,i+1}= S'_{i+1}\cap V_{i+1,i+1}$, and let $S_{i+1}\subseteq V_{i+1}$ be the original set of $S_{i+1,i+1}$ in graph $G$. Then, we define
		\begin{eqnarray}\label{eq:newthresholdfunction}
		\hat{f}_{v'_i}(S'_{i+1}) =
		\left\lbrace
		\begin{aligned} 
		&1,~~  u'\in S'_{i+1}\\
		&f_{v_j}(S_{i+1}), ~~u'\notin S'_{i+1}\\
		\end{aligned}
		\right.
		\end{eqnarray}
		
		Note that, if $i\neq j$ which means $v'_i$ is not in the rightmost set, then $f_{v_j}(S_{i+1})$ is always $0$. This means $v'_i$ is activated if and only if $u'$ is active. When $i=j$, whether or not $v'_i$ is activated depends on (1) whether or not $u$ is active; (2) the set $S'_{i+1}\cap V_{i+1,i+1}$.
		
		For every node $v'_m$ in the bottom layer, that is $v'_m\in V'_m$, we let $\hat{f}_{v'_m}(S')=0$ for every $S'\subseteq V'$.
		
		So far, we have finished the construction of $Gt'$.
		The left is to prove that $Gt'$ is locally \adk if $Gt$ is locally \adk.
		
		First, it is easy to check that $\hat{f}_{v'_i}$ must be \adinfty if $v'_i\notin V_{i,i}$.
		Thus, we only need to consider the case that $v'_i\in V_{i,i}$.
		
		When $k=1$, $\hat{f}_{v'_i}$ must be monotone if $f_{v_j}$ is monotone since $f_{v_j}(S_{j+1})\leq 1$.
		When $2\leq k\leq |V'_{i+1}|-|S'_{i+1}|$,
		for every $2\leq\ell\leq k$, for every $A'\subseteq V'_{i+1}\setminus S'_{i+1}$ with $|A'|=\ell$, we consider the sign of $\Delta_{A'}\hat{f}_{v'_i}(S'_{i+1})$ by discussing different cases.
		\begin{enumerate}[$\bullet$]
			\item
			Case 1: $A'\setminus V_{i+1,i+1}\neq \emptyset$.
			In this case, the following two scenarios need to be discussed separately.
			\begin{enumerate}[$\cdot$]
				\item Case 1.1 $A'\setminus V_{i+1,i+1}\neq \{u'\}$.
				For every $v'_{i+1} \in V_{i,i+1}\setminus\{u'\}$ and $S'\subseteq V'$, we have $\hat{f}_{v'_i}(S'\cup \{v'_{i+1}\})=\hat{f}_{v'_i}(S')$, thus, in this case,
				$\Delta_{A'}\hat{f}_{v'_i}(S'_{i+1})=0$.
				\item Case 1.2 $A'\setminus V_{i+1,i+1}=\{u'\}$.
				In this case, 
				
				 $\Delta_{A'}\hat{f}_{v'_i}(S'_{i+1})=\Delta_{u'}\Delta_{A'\setminus \{u'\}}\hat{f}_{v'_i}(S'_{i+1})=\Delta_{A'\setminus \{u'\}}\hat{f}_{v'_i}(S'_{i+1}\cup\{u'\})-\Delta_{A'\setminus \{u'\}}\hat{f}_{v'_i}(S'_{i+1})=0-\Delta_{A'\setminus \{u'\}}\hat{f}_{v'_i}(S'_{i+1})=-\Delta_{A}f_{v_j}(S_{j+1})$, where $A\subseteq V_{j+1}$ is the original set of $A'\cap V_{i+1,i+1}$.
				Thus, $|A|=\ell-1$ and then $Sgn(\Delta_{A'}\hat{f}_{v'_i}(S'_{i+1}))=-Sgn(\Delta_{A}f_{v_j}(S_{j+1}))=(-1)^{\ell+1}$.
			\end{enumerate}
			\item  Case 2: $A'\setminus V_{i+1,i+1}=\emptyset$.
			We still let $A\subseteq V_{j+1}$ be the original set of $A'\cap V_{i+1,i+1}$.
			In this case, we have $|A|=\ell$ and  $Sgn(\Delta_{A'}\hat{f}_{v'_i}(S'_{i+1}))=Sgn(\Delta_{A}f_{v_j}(S_{j+1}))=(-1)^{\ell+1}$.
		\end{enumerate}
		Based on the above analysis, $Sgn(\Delta_{A'}\hat{f}_{v'_i}(S'_{i+1}))=(-1)^{|A'|+1}$ always holds if $Gt$ is locally \adk.
		Thus, $Gt'$ is locally \adk since $v'_i$, $S'_{i+1}$, $A'$ are selected optionally.
		\item
		Now we prove that for a node in $Gt$, the activation probability can be transformed to the activation probability of some node in $Gt'$. Specifically, we show that $\sum_{u\in V}P_u(S)=\sum_{u\in T}P'_{u}(S'_m)$ for every $S\subseteq V$, where $T=V_{1,1}\cup V_{2,2}\cdots\cup V_{m,m}$.
		
		In \cite{kempe2005decreasingcascade}, KKT proved a conclusion which is useful for our proof in this part:
		under the general threshold model, the distribution over active sets at the time of quiescence is the same regardless of the waiting time $\tau$.
		``Waiting time'' is denoted by a vector $\tau=(\tau_1, \tau_2,\cdots,\tau_{|V|})$ and for each $v\in V$, $\tau_v$ means when $v$'s criterion for activation has been met at time $t$, $v$ only becomes active at time $t+\tau_v$.
		
		Given a seed set $S\subseteq V$, let $S'\subseteq V'$ be the set of all copy nodes corresponding to nodes in $S$.
		Then for every $v\in S'$ we set $\tau_v=0$ and for every $v\in V'\setminus S'$ we set $\tau_v=m$.
		Under this setting, the diffusion process from time $t=m$ in $Gt'$ is equivalent to the process from time $t=0$ in $GT$. Thus, $\sum_{u\in V}P_u(S)=\sum_{u\in T}P'_{u}(S'_m)$ holds for the top level node set $T$.
	\end{enumerate}
\end{proof}

Based on Theorem \ref{thm:layeradk_bottom}, $P'_{u'}(S'_m)$ is \adk for every $u'\in V'$ if $Gt'$ is locally \adk.
The second property in Lemma \ref{lem:graphequivalent} is $Gt'$ is locally \adk if $Gt$ is locally \adk.
Thus, we can conclude that $Gt$ is globally \adk if it is locally \adk.
That is, Theorem \ref{thm:layeradk} holds.

\subsection{From locally \adk\ to globally \adk: DAG}\label{sec:localtoglobal_dag}

In this section, we extend our results on layered graphs to DAGs.
A \textbf{directed acyclic graph (DAG)} is a directed graph that has no directed cycles.
Our main theorem in this section is:
\begin{theorem}\label{thm:adk_dag}
	Given any instance of general threshold model \gtins in which $G=(V; E)$ is a DAG and $f_v$ is \adk\ for every $v\in V$, then the spread function $\sigma$ is \adk.
	In another word, \gtins is globally \adk if it is locally \adk when $G=(V; E)$ is a DAG.
\end{theorem}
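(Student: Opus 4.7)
The plan is to reduce the DAG case to the layered-graph case already handled in Theorem \ref{thm:layeradk} by inflating the DAG into a layered graph through the insertion of deterministic ``relay'' dummy nodes on every edge that would otherwise skip layers. First I would assign each node $v\in V$ the level $\ell(v)$ equal to the length of the longest directed path starting at $v$, place $v$ at layer $\ell(v)+1$ of a new graph $G'$, and for every edge $(u,v)\in E$ with $\ell(u)-\ell(v)\geq 2$ replace it by a chain of $\ell(u)-\ell(v)-1$ dummy nodes, one per intermediate layer, so that $G'$ satisfies Definition \ref{def:layergraph}.

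Next I would specify threshold functions on $G'$ so that local \adk\ is preserved. Each dummy $d$ has a single in-neighbor $p$ in $G'$, and I would set $\hat f_d(S')=\mathbf{1}\{p\in S'\}$; this is \adinfty\ (hence \adk) because its only nonzero difference is the first-order $\Delta_p \hat f_d\equiv 1$, while every higher-order difference vanishes identically. For each original node $v$, I would define $\hat f_v$ from $f_v$ by a pure relabeling of arguments: if the dummy sitting immediately above $v$ along the chain originating at $u\in IN(v)$ is called $d_{u,v}$, then I identify the event $d_{u,v}\in S'$ with the event $u\in S$. Since \adk\ is invariant under coordinate relabeling, $\hat f_v$ inherits the \adk\ property of $f_v$.

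I would then verify that for every seed set $S\subseteq V$ and every original node $v\in V$ the activation probability $P'_v(S)$ in $G'$ equals $P_v(S)$ in the DAG. The dummy chains act as deterministic relays: once $u$ becomes active the chain above each of its outgoing edges activates in consecutive rounds, after which $v$'s relabeled threshold function sees exactly the set of active original in-neighbors it would see in the DAG. The pacing difference between the DAG and $G'$ (different numbers of synchronous rounds to propagate along a long edge) is handled by the waiting-time argument of \cite{kempe2005decreasingcascade} already invoked in the proof of Lemma \ref{lem:graphequivalent}, which guarantees that the final active-set distribution is independent of per-node waiting times.

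Finally I would invoke Theorem \ref{thm:layeradk_bottom} together with Lemma \ref{lem:graphequivalent} in its natural per-node form: the diagonal set $T=V_{1,1}\cup\cdots\cup V_{m,m}$ used in the proof of Lemma \ref{lem:graphequivalent} contains a distinguished copy of every original node $v$ whose bottom-seed activation probability equals $P_v(S)$, so those arguments in fact yield that each individual $P_v$ in a locally \adk\ layered graph is \adk\ in $S$, not merely their sum. Applied to $G'$, each $P'_v$ with $v\in V$ is \adk, whence the DAG spread function $\sigma(S)=\sum_{v\in V}P_v(S)=\sum_{v\in V}P'_v(S)$ is \adk\ by closure under nonnegative linear combinations. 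The main obstacle I expect is the careful execution of the two equivalences — that the DAG and $G'$ diffusions induce the same marginals on original nodes, and that the proofs of Theorem \ref{thm:layeradk_bottom} and Lemma \ref{lem:graphequivalent} really do yield per-node and not merely aggregate \adk-ness — rather than any new conceptual machinery.
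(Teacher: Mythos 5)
Your proposal is correct and follows essentially the same route as the paper: convert the DAG to a layered graph by assigning levels and inserting chains of deterministic relay (dummy) nodes on skip-layer edges, give the dummies indicator threshold functions (which are trivially \adinfty) and give original nodes relabeled copies of their old threshold functions, appeal to the waiting-time invariance of \cite{kempe2005decreasingcascade} to match the final active-set distributions, and then invoke the layered-graph result. The only cosmetic difference is your level function (longest path \emph{from} $v$) versus the paper's iterative peeling of in-degree-zero nodes (effectively longest path \emph{to} $v$ from a source); both yield valid layerings with edges crossing exactly one layer after dummies are inserted, and your observation that Theorem~\ref{thm:layeradk_bottom} and Lemma~\ref{lem:graphequivalent} already give per-node \adk-ness is precisely what the paper implicitly uses when summing $P_u$ over $u$.
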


Similar to the proof in Section \ref{sec:seed_anylayer}, we prove Theorem \ref{thm:adk_dag} by constructing an equivalent instance of general threshold model defined on a layered graph for every instance of general threshold model defined on a DAG.

\begin{lemma}\label{lem:graphequivalent_dag}
	Given any instance of general threshold model \gtins with $G=(V,E)$ is a DAG, there exists another instance of general threshold model $Gt'=(V',E',\{\hat{f}_v\}_{v\in V'})$ satisfying that:
	\begin{enumerate}[(i)]
		\item $G'=(V',E')$ is a layered graph with $V\subseteq V'$, that is, there exists a copy set of $V$ in $V'$.
		\item $Gt'$ is locally \adk if $Gt$ is locally \adk, for every $k\geq 0$.
		\item For every $S\subseteq V$, let $S'$ be the copy set of $S$ in $V'$, there exists a subset $T\subseteq V'$ such that $\sigma(S)=\sum_{u\in V}P_u(S)=\sum_{u\in T}P'_u(S')$, where $P_u(S)$ and $P'_u(S')$ denote probabilities that $u$ becomes active in $Gt$ and $Gt'$ with seed set $S$ and $S'$, respectively.
	\end{enumerate}
\end{lemma}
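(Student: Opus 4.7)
The plan is to convert the DAG $G$ into a layered graph $G'$ by topologically stratifying the nodes and subdividing every edge that skips layers with dummy nodes, then lifting the threshold functions to $G'$. This mirrors the strategy of Lemma \ref{lem:graphequivalent}; the new ingredient is the dummy-node subdivision needed to fit an arbitrary DAG into the ``edges only between consecutive layers'' format of Definition \ref{def:layergraph}.

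First I would define a level function $\ell: V \to \{1,\ldots,m\}$ by $\ell(v) = 1 + \max\{\ell(w) : (v,w) \in E\}$, with $\ell(v)=1$ for sinks, so that $\ell(u) > \ell(v)$ for every edge $(u,v) \in E$. Place each $v \in V$ in layer $V'_{\ell(v)}$. For every edge $(u,v) \in E$ with $d = \ell(u)-\ell(v) > 1$, insert fresh dummy nodes $w^{(u,v)}_1, \ldots, w^{(u,v)}_{d-1}$ with $w^{(u,v)}_i \in V'_{\ell(u)-i}$, joined by the chain $(u,w^{(u,v)}_1), \ldots, (w^{(u,v)}_{d-1},v)$ in $E'$; edges with $d=1$ pass through unchanged. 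By construction $G'$ is layered and $V \subseteq V'$, giving (i). For thresholds, set $\hat{f}_w(S') = \mathbf{1}[w' \in S']$ for every dummy $w$ with unique in-neighbor $w'$; this is \adinfty since all higher-order differences vanish identically. For each $v \in V$, the in-neighbors of $v$ in $G'$ are in natural bijection with $IN_G(v)$ (either $u$ itself or the terminal dummy of $u$'s chain), so define $\hat{f}_v$ by pulling back $f_v$ along this bijection; as a pure renaming of arguments it inherits \adk from $f_v$, giving (ii).

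For (iii), I would couple the two processes by using the same threshold $\theta_v$ for each $v \in V$ and invoking the waiting-time invariance result of Kempe et al.\ cited in the proof of Lemma \ref{lem:graphequivalent}: set $\tau_w = 0$ for each dummy $w$ and $\tau_v = 1$ for each $v \in V$. Then in $Gt'$ every dummy chain transmits the activation of its source instantaneously, so at each integer time each original node $v \in V$ sees exactly the set of its $G$-in-neighbors currently active. A level-by-level induction on the DAG then shows that the $V$-part of the final active set of $Gt'$ from seeds $S'$ coincides almost surely with the final active set of $Gt$ from seeds $S$, so $P_u(S) = P'_u(S')$ for every $u \in V$; taking $T$ to be the copy of $V$ inside $V'$ yields $\sigma(S) = \sum_{u \in T} P'_u(S')$. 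The main obstacle is precisely this coupling: the layered model enforces synchronous, one-layer-per-tick propagation, while the DAG permits edges to fire on their own schedule, and one must argue that the dummy-chain delays do not distort the distribution of the final active set. Waiting-time invariance is exactly the tool that tames this. Once Lemma \ref{lem:graphequivalent_dag} is established, it combines with Theorem \ref{thm:layeradk} applied to $Gt'$ and the closure of \adk under sums to yield Theorem \ref{thm:adk_dag}.
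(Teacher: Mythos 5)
Your proposal is correct and follows essentially the same route as the paper's own proof: topologically stratify the DAG, subdivide skip-layer edges with dummy chains, lift each threshold function to the layered graph by treating dummies as pass-through gates (with dummy thresholds being indicator functions, which are trivially \adinfty), and invoke Kempe et al.'s waiting-time invariance to couple the two processes. The only cosmetic differences are the layering rule (you define $\ell(v)$ directly as one plus the longest path to a sink, whereas the paper iteratively peels off in-degree-$0$ nodes --- both are valid topological stratifications that merely produce a different set of skip-layer edges to subdivide) and the waiting-time constants (you use $\tau_v=1$ for original nodes, the paper uses $\tau_v=|V_D|$; either choice works since the waiting-time invariance theorem makes the final distribution insensitive to $\tau$, so the only requirement is that dummies relay faithfully).
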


\begin{proof}[Proof of Lemma \ref{lem:graphequivalent_dag}]
	The outline of this proof is similar to the proof of Lemma \ref{lem:graphequivalent}, we first construct $Gt'$ according to $Gt$ and then analyze properties of $Gt'$.
	
	\begin{enumerate}[(i)]
		
		\item
		Given a a DAG $G=(V,E)$, we construct a layered graph $G'=(V',E')$ by the following process (Figure \ref{fig:dag2layer} is an illustration):
		\begin{figure}
			\centering
			\includegraphics[scale=0.4]{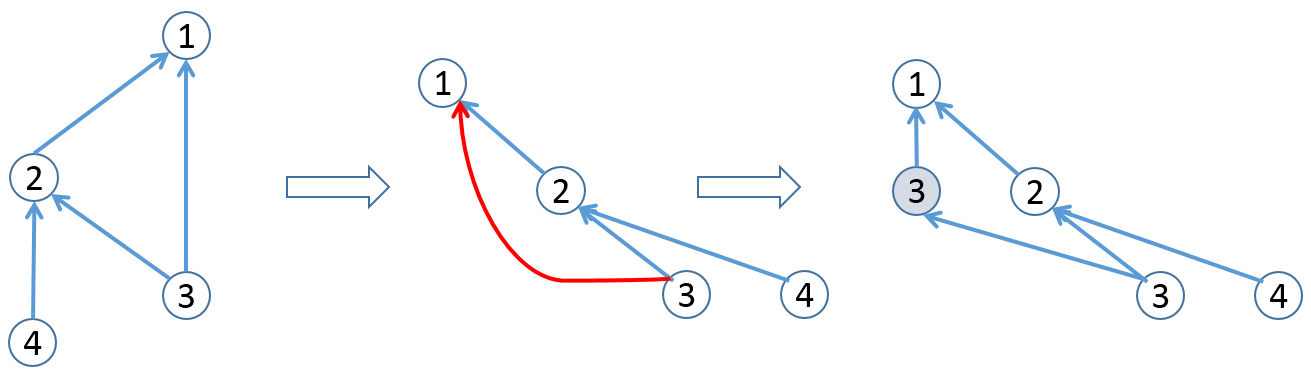}
			\caption{Transform a DAG to a layered graph}\label{fig:dag2layer}
		\end{figure}
		
		\begin{enumerate}[(a)]
			\item Dividing $V$ into layers:  $V=V_1\cup V_2\cdots\cup V_m$.
			First, let $V_m$ be the set of nodes in $V$ with in-degree 0 (node 3 and node 4 in Figure \ref{fig:dag2layer}).
			Note that $V_m\neq \emptyset$ since $G$ is a DAG.
			We put $V_m$ into the bottom layer and then delete $V_m$ as well as edges with at least one endpoint in $V_m$ (i.e. in-edges and out-edges of node in $V_m$) from $G$.
			The remaining graph $G\setminus V_m$ is also a DAG, then we can continue to select nodes with in-degree 0 from $G\setminus V_m$ and generate $V_{m-1}$ (node 2 in Figure \ref{fig:dag2layer}).
			By that analogy, we can obtain $V=V_1\cup V_2\cdots\cup V_m$.
			\item Adding edges according to $E$.
			We add edges in $E$ into layered nodes without any changing in this step.
			Thus, in the produced graph, edges must sent from a node locating at a lower layer to a node locating at an upper layer.
			However, it is not a layered graph since there exist some skip-layer edges whose two endpoints not locate at adjacent layers (see the red edge in the second graph in Figure \ref{fig:dag2layer}).
			\item Adding dummy nodes and generate a layered graph.
			Now we eliminate skip-layer edges by creating some dummy nodes and dummy edges.
			For any two nodes $v_i\in V_i$ ($1\leq i\leq n-2$) and $v_{i+q}\in V_{i+q}$ ($q\geq 2$), if there is a skip-layer edge $(v_{i+q},v_i)$, we add $q-1$ dummy nodes $v_{i+q-1},v_{i+q-2}, \cdots, v_{i+1}$ into $V_{i+q-1} ,V_{i+q-2}, \cdots,V_{i+1}$, respectively.
			We say the source node of these dummy nodes is $v_{i+q}$.
			Then we delete edge $(v_{i+q},v_i)$ and add edge $(v_{i+q}$,$v_{i+q-1})$, $(v_{i+q-1}$,$v_{i+q-2})$,$\cdots$, $(v_{i+1},v_i)$.
			Let $V_D$ be the set of dummy nodes, $E_D$ be the set of dummy edges constructed above and $E_s$ be the set of skip-layer edges.
			Then $G'=(V'=V\cup V_D,~E'=E\setminus E_s\cup E_D)$ is a layered graph with $V\subseteq V'$.
		\end{enumerate}
		
		\item
		Now we prove the equivalence of locally \adk property between $Gt$ and $Gt'$.
		To complete the construction of $Gt'$, we need to set the threshold function $\hat{f}_{v'}$ of each node $v'\in V'$.
		There are two classes nodes in $V'$ and we define the threshold functions of them separately.
		\begin{enumerate}[$\bullet$]
			\item For every node $v'\in V_D$, for every $S'\subseteq IN(v')$, the threshold function of $v'$ is defined as:
			\begin{equation*}
			\label{eq:thresholdfunction_dag_dummynodes}
			\hat{f}_{v'}(S') =
			\left\lbrace
			\begin{aligned}
			&1,		~ &S'\neq \emptyset ; \\
			&0,		~ &otherwise. \\
			\end{aligned}
			\right.
			\end{equation*}
			Indeed, for every  $v'\in V_D$, $IN^{v'}$ contains only one node.
			Thus, $v'$ must be active only if its in-neighbor is active.
			\item for every node $v'\notin V_D$, for every $S'\subseteq IN(v')$, replace all dummy nodes in $S'$ with their source nodes, then we obtain the original set $S\subseteq V$ of $S'$.
			In this case, the threshold function of $v'$ is defined as: $\hat{f}_{v'}(S')=f_v(S)$.
		\end{enumerate}
		Under the above construction, the \adk property of $\hat{f}_{v'}$ is easy to verify for every $v'\in V'$.	
		\item
		The remaining task is to show the equivalence of the spread function between $Gt$ and $Gt'$.
		In $Gt'$, for each node $v'_D\in V_D$, we set the waiting time of $v'_D$ is 0, for each node $v'\in V'\setminus V_D$, we set the waiting time of $v'$ is $|V_D|$.
		Then the diffusion process of $Gt$ from time 0 is equivalent to the diffusion process of $Gt$ from time $|V_D|$.
		Thus, for every $S\subseteq V$, we have $\sum_{u\in V}P_u(S)=\sum_{u\in T}P'_u(S')$, where
		$T=V'\setminus V_D$, and $S'$ is the copy set of $S$ in $V'$.
	\end{enumerate}
\end{proof}

Combining Lemma \ref{lem:graphequivalent_dag} and Theorem \ref{thm:layeradk}, Theorem \ref{thm:adk_dag} holds.

\section{From locally AD-$\infty$ to globally AD-$\infty$}\label{sec:localtoglobal_adinfty}

In Section \ref{sec:localtoglobal_adk}, we prove the correctness of our conjecture when the social graph is a DAG.
In this section, we prove it for every social graph when $k\geq |V|$, as shown in Theorem \ref{thm:adinfty_local2global}.

\begin{theorem}\label{thm:adinfty_local2global}
	Given an instance of general threshold model \gtins, then $Gt$ is globally \adinfty if it is locally \adinfty.
\end{theorem}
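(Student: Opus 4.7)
The plan is to reduce everything to the triggering model and then to coverage functions. Since $Gt$ is locally AD-$\infty$, Theorem~\ref{thm:gttotrigger} gives us an equivalent triggering model instance $Tr=(V,E,\{\mathcal{D}_v\}_{v\in V})$. Because two equivalent instances induce identical distributions over final active sets under every seed, their spread functions coincide, so it suffices to prove that the spread function of $Tr$ is AD-$\infty$.

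In $Tr$, I would fix a realization $\omega$ of all independent triggering sets $\{T_v\}_{v\in V}$. In the deterministic directed graph $G_\omega$ with edge set $\{(u,v):u\in T_v\}$, a node $u$ is activated from seed set $S$ iff $S$ intersects the set $R_\omega(u)$ of ancestors of $u$ in $G_\omega$. Hence $\mathbb{1}[u\text{ is active}]=g_{R_\omega(u)}(S)$, where for a fixed $R\subseteq V$, $g_R(S)\triangleq\mathbb{1}[S\cap R\neq\emptyset]$ is a coverage function. Writing $P_u(S)=\mathbb{E}_\omega[g_{R_\omega(u)}(S)]$ and $\sigma(S)=\sum_{u\in V}P_u(S)$ expresses the spread as a nonnegative linear combination of coverage functions. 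Since the paper already notes that AD-$k$ (and hence AD-$\infty$) is preserved under nonnegative linear combinations, it remains only to prove that every coverage function $g_R$ is AD-$\infty$.

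For the coverage function, I would argue directly from Definition~\ref{def:difference}. Take any $A,S\subseteq V$ with $A\cap S=\emptyset$ (otherwise the difference is $0$). Since $g_R(S\cup(A\setminus B))=1-\mathbb{1}[S\cap R=\emptyset]\cdot\mathbb{1}[(A\setminus B)\cap R=\emptyset]$ and $\sum_{B\subseteq A}(-1)^{|B|}=0$ for $A\neq\emptyset$, a short calculation using the substitution $C=A\cap R$ yields
\[
\Delta_A g_R(S)=(-1)^{|A|+1}\,\mathbb{1}[S\cap R=\emptyset]\,\mathbb{1}[A\subseteq R],
\]
so $(-1)^{|A|+1}\Delta_A g_R(S)\geq 0$ for every $A$ of any size, i.e.\ $g_R$ is AD-$\infty$. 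Alternatively, one could compute the multilinear extension $G_R(\mathbf{x})=1-\prod_{r\in R}(1-x_r)$, observe that its $\ell$-th mixed partial derivatives have the required sign $(-1)^{\ell+1}$, and then invoke Lemma~\ref{lem:Htoh} to transfer the conclusion back to $g_R$.

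The main obstacle will be the bookkeeping in identifying the spread function with a sum of coverage functions: one has to be precise that activation in the triggering model coincides with reachability from the seed in the random deterministic graph $G_\omega$, independently of the order in which nodes activate. Once this reachability picture is pinned down, the rest of the argument is just the combinatorial identity for $\Delta_A g_R$ together with the closure property of AD-$\infty$ under nonnegative combinations; no continuous machinery or graph surgery (unlike the DAG case) is needed.
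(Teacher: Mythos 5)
Your argument is correct, and in its second half it takes a genuinely more direct route than the paper. Both proofs start identically: local \adinfty yields an equivalent triggering instance via Theorem~\ref{thm:gttotrigger}, and in a fixed realization of the triggering sets (a live-edge graph $G_\omega$) a node $u$ is activated from $S$ precisely when $S$ meets the ancestor set $R_\omega(u)$, so $P_u(S)$ is the probability of that event. The paper then sets $R_u(T)=\Pr[R_\omega(u)=T]$, applies a M\"obius inversion (Lemma~\ref{lem:adinfty_sigma}), and travels through a chain of abstract lemmas (Lemmas~\ref{lem:hemptygeq0}--\ref{lem:VsetminusS2S}) about set functions admitting nonnegative M\"obius transforms before concluding that $P_u$ is \adinfty. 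You instead write $P_u$ as a convex combination, over realizations $\omega$, of the elementary indicator set functions $g_R(S)=\mathbf{1}[S\cap R\neq\emptyset]$, verify by a short inclusion--exclusion computation that each $g_R$ is \adinfty, and then invoke the closure of \adinfty under nonnegative linear combinations already noted in Section~\ref{sec:Problems}. The two decompositions are the same object seen from opposite sides ($R_u(T)$ is exactly the weight your mixture places on $g_T$), but your route replaces the paper's three intermediate lemmas by one explicit sign identity, $(-1)^{|A|+1}\Delta_A g_R(S)=\mathbf{1}[S\cap R=\emptyset]\cdot\mathbf{1}[A\subseteq R]$ for nonempty $A$ disjoint from $S$, which is considerably more transparent and self-contained. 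The identity does check out, and the one point you should state explicitly (as you flag) is the standard live-edge equivalence for the triggering model: activation from $S$ in a fixed realization coincides with reachability from $S$ in $G_\omega$, independently of the order in which nodes fire.
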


Based on Theorem \ref{thm:gttotrigger}, an instance of general threshold model is indeed a triggering instance (Definition \ref{def:triggering}) if this general threshold instance is locally \adinfty. We are going to prove Theorem \ref{thm:adinfty_local2global} by virtue of the properties of triggering model.  

We prove Theorem \ref{thm:adinfty_local2global} via following lemmas.
\begin{lemma}\label{lem:adinfty_sigma}
	Given an instance of the general threshold model \gtins where $Gt$ is locally \adinfty, for each node $u\in V$, defining a set function as $R_u(S) = \sum_{T:T \subseteq S} (-1)^{|S|-|T|}(1-P_u(V\setminus T))$, then $R_u(S)\in [0,1]$ for every $S\subseteq V$.
\end{lemma}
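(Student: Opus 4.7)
The plan is to exploit Theorem~\ref{thm:gttotrigger}: since $Gt$ is locally \adinfty, it admits an equivalent triggering-model formulation $Tr=(V,E,\{\cD_v\}_{v\in V})$, and I would move to that view from the outset. In the triggering model the whole diffusion can be coupled to a single random directed subgraph: each $v\in V$ independently samples a triggering set $T_v\sim \cD_v$, we keep an edge $(w,v)$ iff $w\in T_v$, and then $u$ becomes active from a seed set $S_0$ if and only if some node of $S_0$ can reach $u$ in this random subgraph.

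For each $u\in V$, I would introduce the random ``backward reachable'' set $R(u)\subseteq V$ consisting of all nodes from which $u$ is reachable in the random triggering graph (with $u\in R(u)$ via the empty path). The point of this object is the identity
\[
1 - P_u(V\setminus T)\;=\;\Pr[\,u\text{ not active from seed }V\setminus T\,]\;=\;\Pr[\,R(u)\cap (V\setminus T)=\emptyset\,]\;=\;\Pr[\,R(u)\subseteq T\,],
\]
so the map $T\mapsto 1-P_u(V\setminus T)$ is exactly the zeta transform, over the Boolean lattice $2^V$, of the law of $R(u)$. Standard Möbius inversion then yields
\[
\sum_{T\subseteq S}(-1)^{|S|-|T|}\bigl(1-P_u(V\setminus T)\bigr)\;=\;\sum_{T\subseteq S}(-1)^{|S|-|T|}\Pr[\,R(u)\subseteq T\,]\;=\;\Pr[\,R(u)=S\,],
\]
and I would simply define $R_u(S):=\Pr[\,R(u)=S\,]$. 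This value lies in $[0,1]$ (and in fact $\{R_u(S)\}_{S\subseteq V}$ is a probability distribution on $2^V$) by construction, which is the only nontrivial content of the lemma statement.

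The main conceptual step is therefore the very first one: the invocation of Theorem~\ref{thm:gttotrigger}, which is exactly where the \adinfty hypothesis is spent, to buy an honest probability-space coupling. After that, no direct manipulation of high-order differences of $P_u$ is needed --- the alternating signed sum that the lemma writes down is forced to collapse into a probability by Möbius inversion over $2^V$. The only bookkeeping subtlety I anticipate is at the ``boundary'', namely making sure the convention $u\in R(u)$ is consistent with the definition of $P_u$ on seed sets containing $u$ itself (so that $u$ being a seed is correctly identified with $u$ being reachable from itself); this is routine and does not affect the structure of the argument.
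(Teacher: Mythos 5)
Your proposal is correct and follows essentially the same route as the paper: invoke Theorem~\ref{thm:gttotrigger} to pass to an equivalent triggering-model instance, reinterpret the diffusion via random live-edge graphs, take $R_u$ to be the law of the backward-reachable set of $u$, and extract the claimed formula by Möbius inversion over $2^V$. The only cosmetic difference is that you observe directly that $T\mapsto 1-P_u(V\setminus T)$ is the zeta transform of that law (so standard Möbius inversion applies immediately), whereas the paper starts from $P_u(S)=\sum_{T\cap S\neq\emptyset}R_u(T)$ together with $\sum_T R_u(T)=1$ and derives an equivalent inversion identity; your presentation is the cleaner of the two but the content is the same.
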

\begin{lemma}\label{lem:hemptygeq0}
	Given a set function $h:2^{V} \to [0,1]$, if there exists a set function $g:2^{V} \to [0,1]$ satisfying that $g(S) = \sum_{T:T \subseteq S} (-1)^{|S|-|T|} h(T)$ for every $S\subseteq V$, then all differences of $h(\emptyset)$ are nonnegative.
\end{lemma}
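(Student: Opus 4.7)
The plan is to show directly that $\Delta_A h(\emptyset) = g(A)$ for every $A\subseteq V$, which then gives nonnegativity for free from the hypothesis that $g$ takes values in $[0,1]$.

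First, I would unfold the definition of the higher-order difference at the empty set. By Definition \ref{def:difference},
\begin{equation*}
\Delta_A h(\emptyset) = \sum_{B\subseteq A} (-1)^{|B|} h\bigl(\emptyset \cup (A\setminus B)\bigr) = \sum_{B\subseteq A} (-1)^{|B|} h(A\setminus B).
\end{equation*}
Next, I would perform the change of summation variable $T := A\setminus B$ (so $B = A\setminus T$ and $|B| = |A|-|T|$) to obtain
\begin{equation*}
\Delta_A h(\emptyset) = \sum_{T\subseteq A} (-1)^{|A|-|T|} h(T).
\end{equation*}
By the hypothesis of the lemma applied with $S = A$, the right-hand side is exactly $g(A)$.

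Finally, since $g$ is assumed to take values in $[0,1]$, we have $g(A) \geq 0$, hence $\Delta_A h(\emptyset) \geq 0$ for every $A\subseteq V$, which is precisely the claim that all differences of $h$ evaluated at $\emptyset$ are nonnegative. There is essentially no obstacle here; the argument is a one-line Möbius-style reindexing whose whole purpose is to identify the abstract difference operator with the given coefficient function $g$. The only care needed is the sign bookkeeping in the substitution $|B| = |A|-|T|$, which makes the $(-1)^{|B|}$ in the definition of the difference match the $(-1)^{|S|-|T|}$ in the hypothesis exactly.
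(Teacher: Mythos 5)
Your proof is correct and establishes the same key identity as the paper, namely $\Delta_A h(\emptyset) = g(A)$, and then concludes from $g \geq 0$. The only difference is technique: you prove the identity by a single change of summation variable $T = A\setminus B$ directly from Definition~\ref{def:difference}, whereas the paper derives it iteratively by peeling off elements $x_1, x_2, \dots, x_m$ one at a time from $g(S)$ to telescope the alternating sum into $\Delta_S h(\emptyset)$ — your reindexing is cleaner and arrives at the same place.
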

\begin{lemma}\label{lem:emptypositive2nonemptypositive}
	Given a set function $h:2^{V} \to [0,1]$, if all differences of $h(\emptyset)$ are nonnegative, then for every $S\subseteq V$, all differences of $h(S)$ are nonnegative.
\end{lemma}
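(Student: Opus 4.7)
The plan is to show that every difference $\Delta_A h(S)$ can be expanded as a nonnegative integer combination (in fact, a $0$/$1$ combination) of the differences $\Delta_C h(\emptyset)$, which are nonnegative by hypothesis. The key tool is the Möbius-type inversion formula
\[
h(T) \;=\; \sum_{W\subseteq T} \Delta_W h(\emptyset),
\]
which one verifies by expanding the right-hand side via the definition of $\Delta_W h(\emptyset)$ and using $\sum_{W':\,W'\subseteq T\setminus C}(-1)^{|W'|}=0$ whenever $T\setminus C\neq\emptyset$.

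With this identity in hand, the first step is to reduce to the case $A\cap S=\emptyset$; this is free, since when $A\cap S\neq\emptyset$ the paper already records that $\Delta_A h(S)=0$. For $A\cap S=\emptyset$, I would substitute the inversion formula into the definition:
\[
\Delta_A h(S)
\;=\; \sum_{B\subseteq A}(-1)^{|B|} h\bigl(S\cup(A\setminus B)\bigr)
\;=\; \sum_{B\subseteq A}(-1)^{|B|}\!\!\!\sum_{W\subseteq S\cup(A\setminus B)}\Delta_W h(\emptyset).
\]
Now decompose each subset $W\subseteq S\cup A$ uniquely as $W=W_S\cup W_A$ with $W_S\subseteq S$ and $W_A\subseteq A$; the condition $W\subseteq S\cup(A\setminus B)$ becomes simply $W_A\cap B=\emptyset$. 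Swapping the order of summation produces
\[
\Delta_A h(S)
\;=\; \sum_{W_S\subseteq S}\sum_{W_A\subseteq A}\Delta_{W_S\cup W_A} h(\emptyset)\sum_{B\subseteq A\setminus W_A}(-1)^{|B|}.
\]
The innermost sum equals $0$ unless $A\setminus W_A=\emptyset$, in which case it equals $1$; so only the terms with $W_A=A$ survive, giving the clean identity
\[
\Delta_A h(S) \;=\; \sum_{W\subseteq S} \Delta_{A\cup W} h(\emptyset).
\]

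Since every term on the right is a difference of $h$ evaluated at $\emptyset$, and all such differences are nonnegative by assumption, we conclude $\Delta_A h(S)\ge 0$. I do not anticipate any serious obstacle here: the argument is essentially Möbius inversion plus the standard $(-1)^{|B|}$-cancellation, and no deeper structural property of $h$ beyond the hypothesis is required. The only care needed is the bookkeeping in splitting $W$ into its $S$- and $A$-parts; once that is done, the identity falls out immediately and the conclusion is automatic.
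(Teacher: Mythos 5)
Your proof is correct and takes a genuinely different route from the paper's. The paper argues by induction on $|S|$: writing $S=\{x_1,\dots,x_m\}$, it expands $\Delta_S\Delta_P h(\emptyset)$ by successively peeling off the $x_i$'s to arrive at the telescoping identity
\[
\Delta_S\Delta_P h(\emptyset)=\Delta_P h(S)-\sum_{i=1}^{m}\Delta_{S\setminus\{x_1,\dots,x_i\}}\Delta_P h(\{x_1,\dots,x_{i-1}\}),
\]
then observes that the first term is nonnegative by hypothesis and each subtracted term is nonnegative by the induction hypothesis (applied at the smaller base sets $\{x_1,\dots,x_{i-1}\}$), so $\Delta_P h(S)\ge 0$. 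You instead prove the non-inductive closed form $\Delta_A h(S)=\sum_{W\subseteq S}\Delta_{A\cup W}h(\emptyset)$ directly, via the inversion $h(T)=\sum_{W\subseteq T}\Delta_W h(\emptyset)$ and the standard alternating-sign cancellation. Your closed form is essentially what the paper's induction is unrolling, but it bypasses the bookkeeping of ordering the elements of $S$, requires no induction hypothesis, and makes the conclusion immediate: $\Delta_A h(S)$ is manifestly a sum of quantities assumed nonnegative. In short, your argument is both correct and cleaner; the paper's inductive version buys nothing extra here, while yours gives an explicit nonnegative decomposition that might be reusable elsewhere.
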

\begin{lemma}\label{lem:VsetminusS2S}
	Given a set function $h:2^{V} \to [0,1]$ and a set function $f:2^{V} \to [0,1]$,
	if for every $S\subseteq V$, $f$ and $h$ satisfy that $f(S)=1-h(V\setminus S)$ and all differences of $h(S)$ are nonnegative, then $f$ is \adinfty.
\end{lemma}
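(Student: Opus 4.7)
The plan is to compute $\Delta_A f(S)$ directly from the definition, use the relation $f(S) = 1 - h(V\setminus S)$ to replace $f$ by $h$, and recognize what remains as (a sign times) a difference of $h$, whose sign is controlled by the hypothesis.

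First I would dispose of degenerate cases: if $A\cap S\neq\emptyset$, then $\Delta_A f(S)=0$ trivially (by the remark after Definition~\ref{def:difference}), so the sign condition holds. I will also handle the base case $|A|=0$ separately, where the statement reduces to $f(S)\geq 0$, which is immediate since $f$ maps into $[0,1]$. From here on, fix a nonempty $A\subseteq V$ and $S\subseteq V$ with $A\cap S=\emptyset$, and write $W=V\setminus(S\cup A)$.

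Next, I would expand
\[
\Delta_A f(S) \;=\; \sum_{B\subseteq A}(-1)^{|B|}\bigl[1 - h\bigl(V\setminus(S\cup(A\setminus B))\bigr)\bigr].
\]
Since $A\neq\emptyset$, the constant terms satisfy $\sum_{B\subseteq A}(-1)^{|B|}=0$, so they drop out. Using that $B\subseteq A$ and $A\cap S=\emptyset$, the key set identity
\[
V\setminus\bigl(S\cup(A\setminus B)\bigr) \;=\; W\cup B
\]
holds, and substituting $C=A\setminus B$ (so that $|B|=|A|-|C|$) gives
\[
\Delta_A f(S) \;=\; -\sum_{B\subseteq A}(-1)^{|B|}h(W\cup B) \;=\; (-1)^{|A|+1}\sum_{C\subseteq A}(-1)^{|C|}h\bigl(W\cup(A\setminus C)\bigr) \;=\; (-1)^{|A|+1}\Delta_A h(W).
\]
Therefore $(-1)^{|A|+1}\Delta_A f(S) = \Delta_A h(W)$, and the hypothesis that every difference of $h$ is nonnegative yields $(-1)^{|A|+1}\Delta_A f(S)\geq 0$, i.e.\ $f$ is \adinfty.

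There is no real obstacle here; the proof is purely a bookkeeping exercise in set-difference identities and sign tracking. The only place that requires care is verifying the identity $V\setminus(S\cup(A\setminus B))=W\cup B$ (which relies on $B\subseteq A$ and $A\cap S=\emptyset$) and correctly matching $(-1)^{|B|}$ with $(-1)^{|A|-|C|}=(-1)^{|A|}(-1)^{|C|}$ under the reindexing $C=A\setminus B$.
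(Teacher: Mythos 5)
Your proof is correct and follows essentially the same route as the paper's: expand $\Delta_A f(S)$ via the definition, substitute $f(\cdot)=1-h(V\setminus\cdot)$, drop the constant term, rewrite the argument of $h$ using the complement identity, reindex, and recognize $(-1)^{|A|+1}\Delta_A h(V\setminus(S\cup A))$. You are a bit more careful than the paper in explicitly noting that the cancellation of the constant term requires $A\neq\emptyset$ and in disposing of the degenerate cases $|A|=0$ and $A\cap S\neq\emptyset$, but there is no substantive difference in method.
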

If Lemma~\ref{lem:adinfty_sigma} throuth \ref{lem:VsetminusS2S} all hold, Theorem \ref{thm:adinfty_local2global} can be proved through the following argument.
Given an instance \gtins of general threshold model and $Gt$ is locally \adinfty,
let $P'_u(S)=1-P_u(V\setminus S)$ for every $u\in V$ and $S\subseteq V$, then based on Lemma \ref{lem:adinfty_sigma}, function $P'_u$ satisfies the condition of $h$ in Lemma \ref{lem:hemptygeq0}. Thus, all differences of $P'_u(\emptyset)$ are nonnegative and according to Lemma \ref{lem:emptypositive2nonemptypositive}, all differences of $P'_u(S)$ are nonnegative for every $S\subseteq V$.
Now $P_u$ and $P'_u$ satisfy conditions of $f$ and $h$ in Lemma \ref{lem:VsetminusS2S},  respectively.
Thus, $P_u$ is \adinfty.
Hence, $Gt$ is globally \adinfty since $\sigma(S)=\sum_{u\in V}P_u(S)$ for every $S\subseteq V$.

We first show a conclusion about \textit{Mobius Inversion} (see e.g. \cite{Grabisch2000mobius})as a tool for subsequent proofs.

The Mobius Inversion formula states that given any two set functions $f:2^V\rightarrow \bR$ and $g:2^V\rightarrow \bR$, for every $S\subseteq V$,  if 
\begin{center}
	$f(S)=\sum_{T:T\subseteq S}g(T)$,
\end{center} then
\begin{center}
	$g(S)=\sum_{T:T\subseteq S}(-1)^{|S|-|T|}f(S)$.
\end{center}

We show an equivalent version of Mobius Inversion as following:
\begin{lemma}\label{lem:mobiuseq}
given any two set functions $f:2^V\rightarrow \bR$ and $g:2^V\rightarrow \bR$, for every $S\subseteq V$, if 
\begin{center}
	$f(S)=\sum_{T:T\subseteq V, T \cap S \neq \emptyset}g(T)$,
\end{center} 
then
\begin{center}
	$g(S)=\sum_{T:T \subseteq S} (-1)^{|S|-|T|} (\sum_{Q\subseteq V}g(Q)-f(V\setminus T))$.
\end{center} 
\end{lemma}
\begin{proof}[Proof of Lemma \ref{lem:mobiuseq}]

We first do a transformation of $f$ as following: 	
\begin{center}
		$f(S)=\sum_{T:T\subseteq V, T \cap S \neq \emptyset}g(T)=\sum_{Q:Q\subseteq V}g(Q)-\sum_{T:T\subseteq V\setminus S}g(T)$.
\end{center}

Let $h(S)=f(V\setminus S)$ for each $S\subseteq V$, then 
\begin{center}
	$h(S)=\sum_{Q:Q\subseteq V}g(Q)-\sum_{T:T\subseteq S}g(T)$.
\end{center}

Directly, $\sum_{Q:Q\subseteq V}g(Q)-h(S)=\sum_{T:T\subseteq S}g(T)$.

Based on the classical Mobius Inversion formula, we have
\begin{equation*}
	\begin{aligned}
	g(S)&=\sum_{T:T\subseteq S}(-1)^{|S|-|T|}(\sum_{Q:Q\subseteq V}g(Q)-h(S))\\
	&=\sum_{T:T\subseteq S}(-1)^{|S|-|T|}(\sum_{Q:Q\subseteq V}g(Q)-f(V\setminus S)).
\end{aligned}
\end{equation*}

\end{proof}

The following are proofs of Lemma \ref{lem:adinfty_sigma} to Lemma \ref{lem:VsetminusS2S}.

\begin{proof}[Proof of Lemma \ref{lem:adinfty_sigma}]
	Based on Theorem \ref{thm:gttotrigger},  $Gt$ is equivalent to an instance \trins of triggering model since $Gt$ is locally \adinfty.
	Thus, for every $u\in V$, $P_u(S)$ under $Gt$ is equal to which under $Tr$.
	Now we analyze $P_u(S)$ under $Tr$.
	
	Following the definition of triggering model (Definition \ref{def:triggering}), each node $v\in V$ selects a triggering set $T_v$ from its in-neighbors according to $\cD_v$ initially.
	Then the social graph becomes a ``live-edge graph'':
	if node $u$ belongs to $v$'s triggering set $T_v$, then the edge $(u, v)$ is a live edge, and otherwise $(u, v)$ is a blocked edge,
	the live-edge graph is the social graph containing all nodes in $V$ and only live edges.
	Given a live-edge graph $L$, let $\Gamma(L, S)$ be the set of nodes that are reachable from set $S$ on $L$.
	Here, a node $u\in V$ is reachable from a set $S\subseteq V$ means that there exists a directed path from a node in $S$ to $u$.
	
	Now, for every $u\in V$, we can express $P_u(S)$ under $Tr$ through live edge graphs.
	Given any $T\subseteq V$ and any $u\in V$, let $R_u(T)$ be the probability that $T$ is exactly the set of nodes reachable to $u$ on all live edge graphs of $Tr$ (see an example in Figure \ref{fig:eg_rut}), i.e. $R_u(T) = \sum_{L:T = \{v \in V\mid u\in \Gamma(L,\{v\}) \}}\Pr_L$, where $\Pr_L$ is the probability that the live edge graph of $Tr$ is $L$.
	\begin{figure}
	\centering
	\includegraphics[scale=0.4]{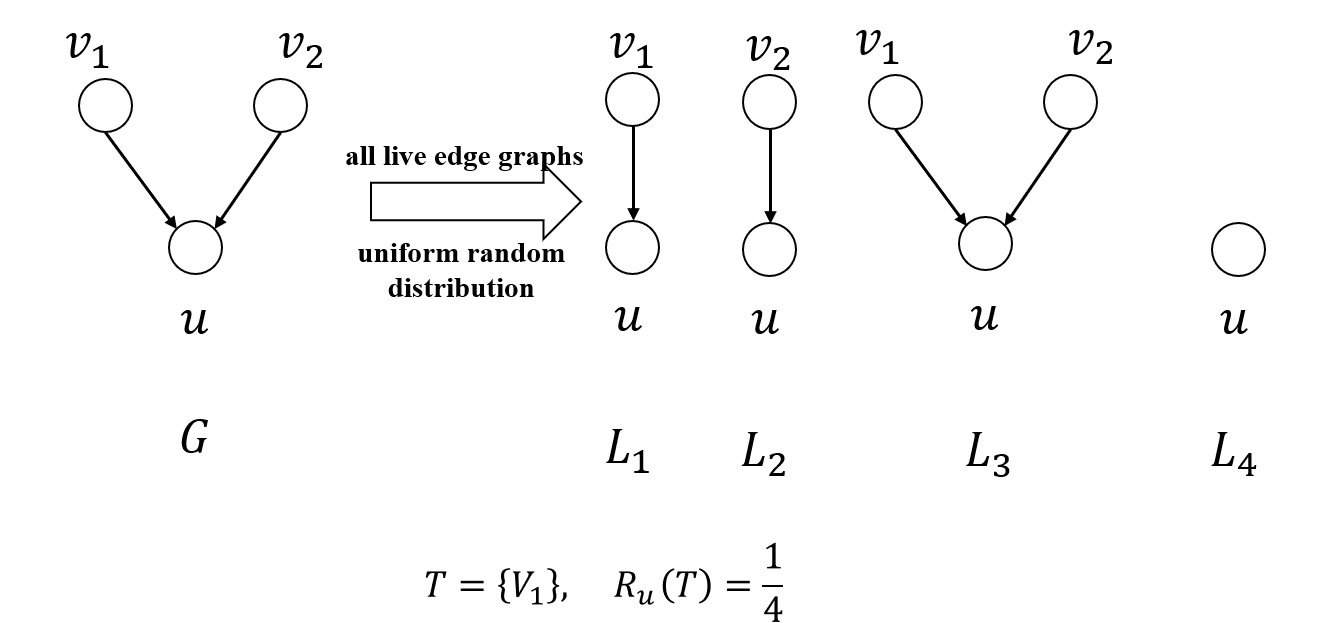}
	\caption{An example of function $R_u(T)$}\label{fig:eg_rut}
\end{figure}
	Thus, $P_u(S)=\sum_{T\subseteq V, T\cap S \ne \emptyset}R_u(T)$.
	Based on Lemma \ref{lem:mobiuseq},  we have $R_u(S)=\sum_{T:T \subseteq S} (-1)^{|S|-|T|} (1-P_u(V\setminus T))$ since $P_u(S)=\sum_{T\subseteq V, T\cap S \ne \emptyset}R_u(T)$ and $\sum_{T\subseteq V}R_u(T)=1$.
\end{proof}

\begin{proof}[Proof of Lemma \ref{lem:hemptygeq0}]
	Given any $S\subseteq V$ and $S=\{x_1,x_2,\cdots, x_m\}$, if $g$ and $h$ satisfy conditions in Lemma \ref{lem:hemptygeq0}, then we decompose function $g$ as follows:
	\begin{equation*}
	\begin{aligned} 
	g(S)&=\sum_{T:T \subseteq S} (-1)^{|S|-|T|} h(T)\\
	&=  \sum_{T:T \subseteq S\setminus\{x_1\}} ((-1)^{|S|-|T|}h(T) + (-1)^{|S|-|T|+1}h(T\cup\{x_1\}))\\
	&=  \sum_{T:T \subseteq S\setminus\{x_1\}} (-1)^{|S|-|T|+1}(h(T\cup\{x_1\}) - h(T) ) \\
	&=  \sum_{T:T \subseteq S\setminus\{x_1\}} (-1)^{|S|-|T|+1} \Delta_{x_1}h(T).\\
	\end{aligned}
	\end{equation*}
Using similar decompositions on $\sum_{T:T \subseteq S\setminus\{x_1\}} (-1)^{|S|-|T|+1} \Delta_{x_1}h(T)$, we have 
	\begin{equation*}
	\begin{aligned} 
		g(S)&=\sum_{T:T \subseteq S\setminus\{x_1\}} (-1)^{|S|-|T|+1} \Delta_{x_1}h(T)\\
		&=  \sum_{T:T \subseteq S\setminus\{x_1,x_2\}} (-1)^{|S|-|T|+2} \Delta_{\{x_1,x_2\}}h(T)\\
		&=  \cdots \\
		&=  \sum_{T:T \subseteq S\setminus\{x_1,x_2,\cdots,x_m\}} (-1)^{|S|-|T|+m} \Delta_{\{x_1,x_2,\cdots,x_m\}}h(T)\\
		&=  \Delta_Sh(\emptyset).\\
	\end{aligned}
\end{equation*}

	Thus, all differences of $h(\emptyset)$ must be nonnegative since function $g$ is always nonnegative.
\end{proof}

\begin{proof}[Proof of Lemma \ref{lem:emptypositive2nonemptypositive}]
	Given any $S\subseteq V$, for every $P\subseteq V\setminus S$, our goal is to prove that $\Delta_Ph(S)\geq 0$.
	We prove this property by induction.
	Initially, when $S=\emptyset$, $\Delta_Ph(S)\geq 0$ sets up.
	Suppose $\Delta_Ph(S')\geq 0$ holds for every $S'$ with $|S'|<k$, now we consider a subset $S$ such that $|S|=k$.
	
	By Definition \ref{def:ADkset}, 
	\begin{equation}\label{eq:emptyadk}
		\Delta_S\Delta_Ph(\emptyset)=\sum_{T:T\subseteq S}(-1)^{|T|}\Delta_Ph(S\setminus T).
	\end{equation} 

	Substituting $S\setminus T$ for $T$ in (\ref{eq:emptyadk}) gives us an equivalent formula:
	\begin{equation}\label{eq:emptyadkeq}
		\sum_{T:T\subseteq S}(-1)^{|T|}\Delta_Ph(S\setminus T)=\sum_{T:T \subseteq S} (-1)^{(|S|-|T|)} \Delta_{P}h(T).
	\end{equation}

Now using similar decomposition method in proof of Lemma \ref{lem:hemptygeq0} and Definition \ref{def:ADkset}, we calculate $\Delta_S\Delta_Ph(\emptyset)$ as follows:
	\begin{equation*}
	\begin{aligned} 
	\Delta_S\Delta_Ph(\emptyset) = &\sum_{T:T \subseteq S} (-1)^{(|S|-|T|)} \Delta_{P}h(T) \\
	=& \sum_{T:T \subseteq S\setminus \{x_1\}} (-1)^{(|S|-|T|+1)} \Delta_{P}h(T\cup\{x_1\}) -\\
	&\sum_{T:T \subseteq S\setminus\{x_1\}} (-1)^{(|S|-|T|+1)} \Delta_{P}h(T)\\
	\end{aligned}
	\end{equation*}

 Equivalently replace $S\setminus \{x_1\}\setminus T$ and $T$, we have 
 \begin{equation*}
 	\begin{aligned} 
 \sum_{T:T \subseteq S\setminus\{x_1\}} (-1)^{(|S|-|T|+1)} \Delta_{P}h(T)&=\sum_{T:T \subseteq S\setminus\{x_1\}} (-1)^{(|T|)} \Delta_{P}h(S\setminus \{x_1\}\setminus T)\\
 &=\Delta_{ S\setminus \{x_1\}}\Delta_{P}h(\emptyset).
\end{aligned}
\end{equation*}

Now we can do the following recursive calculation:
	\begin{equation*}
	\begin{aligned} 
		\Delta_S\Delta_Ph(\emptyset) =  &\sum_{T:T \subseteq S\setminus \{x_1\}} (-1)^{(|S|-|T|+1)} \Delta_{P}h(T\cup\{x_1\}) -
		\Delta_{ S\setminus \{x_1\}}\Delta_{P}h(\emptyset)\\
		= & \cdots \\
		= &\sum_{T:T \subseteq S\setminus\{x_1,x_2,\cdots,x_m\}} (-1)^{|S|-|T|+m}
		\Delta_{P}h(T\cup\{x_1,x_2,\dots,x_{m}\}) \\
		& - \sum_{i=1}^{m} \Delta_{S\setminus\{x_1,x_2,\cdots,x_i\}}\Delta_Ph(\{x_1,x_2,\dots,x_{i-1}\}) \\
		=&\Delta_Ph(S)-\sum_{i=1}^{m} \Delta_{S\setminus\{x_1,x_2,\cdots,x_i\}}\Delta_Ph(\{x_1,x_2,\dots,x_{i-1}\}).
	\end{aligned}
\end{equation*}

	According to the induction assumption,  \begin{center}
		$\Delta_{S\setminus\{x_1,x_2,\cdots,x_i\}}\Delta_Ph(\{x_1,x_2,\dots,x_{i-1}\})\geq 0$ holds for every $i\in [m]$.
	\end{center}

	Thus, $\Delta_Ph(S)\geq 0$ sets up since $\Delta_S\Delta_Ph(\emptyset)\geq 0$.
\end{proof}

\begin{proof}[Proof of Lemma \ref{lem:VsetminusS2S}]
	Given any two sets $S, P\subseteq V$ and $S\cap T=\emptyset$, $f$ is \adinfty if and only if $\Delta_{S}f(P)\geq 0$.
	Now we show the correctness of $\Delta_{S}f(P)\geq 0$.
	\begin{equation*}
	\begin{aligned} 
	\Delta_{S}f(P)
	&=  \sum_{T \subseteq S} (-1)^{|S|-|T|} f(P \cup T) \\
	&=  \sum_{T \subseteq S} (-1)^{|S|-|T|} (1- h(V \setminus (P \cup T)))\\
	&=  \sum_{T \subseteq S} (-1)^{|S|-|T|+1} h(V\setminus(P \cup T))\\
	& = \sum_{T \subseteq S} (-1)^{|S|-|T|+1} h((V \setminus (P \cup S)) \cup (S \setminus T) )\\
	&=\sum_{T \subseteq S} (-1)^{|S|-(|S|-|T|)+1} h((V \setminus (P \cup S)) \cup T)\\
	&=  (-1)^{|S|+1} \sum_{T \subseteq S} (-1)^{|S|-|T|} h((V \setminus (P \cup S)) \cup T)\\
	&=  (-1)^{|S|+1} \Delta_{S}h(V \setminus (P \cup S)) \\
	\end{aligned}
	\end{equation*}
	Thus, $(-1)^{|S|+1}\Delta_{S}f(P)\geq 0$ since $\Delta_{S}h(V \setminus (P \cup S))\geq 0$.
\end{proof}

\section{Discussion of \adk}
From the perspective of approximation ratio and time complexity of the influence maximization problem, we illustrate the possible role of globally \adk property in the general threshold model:

\subsection{Approximation}
 We know that the Max-$k$-Cover problem is a special case of the general threshold model and the threshold function is exactly the coverage function defined in Section \ref{sec:example}. We have shown that the coverage function is \adinfty. However, it is NP-hard to get an approximation ratio better than $1-1/e$ for Max-$k$-Cover problem \cite{feige98setcover}. Thus, it is disappointing that \adk may not bring us better approximation ratio.

\subsection{Computational complexity}
When $k=2$, most existing algorithms for influence maximization are based on the greedy framework, in which it is difficult to avoid estimating the value of the influence function. Specifically, the greedy scheme requires estimating the expected spread of $O(kn)$ node sets. Without prior knowledge
on the expected spread of each node, the estimation of each node costs $O(m)$ time. Computational complexity of this magnitude is unacceptable. However, this problem disappears at \adinfty since \adinfty turns the influence process into a Coverage Process \cite{Salek2010youshareishare}: 

\begin{definition}[Coverages Process \cite{Salek2010youshareishare}]
	Let $\Phi(S)$ be the random variable describing the set of nodes active at the end of a process starting from the set $S$ of active nodes. The process is called a coverage process if there exists a distribution $D$ over graphs $G$ such that for each set $T$ of nodes, $Prob[\Phi (S) = T]$ equals the probability that the set of nodes  reachable starting from $S$ in $G$ is exactly $T$, when $G$ is drawn from the distribution $D$.
\end{definition}

In a coverage process, the process of simulating a  function value can be replaced by constructing reachable sets (see more details in \cite{borgs2014rrset}). Based on the above construction, algorithms taking near-linear time can be designed for the influence maximization problem under a triggering model (equivalent to \adinfty) \cite{borgs2014rrset,tang2015rrset,tang2014newrrset}. 

The gap between AD-2 and \adinfty inspires us to design new algorithms with better time complexity. The question is whether \adk can help us to design new methods to avoid the function value simulation. This is an interesting question for future work.
 
\section{Conclusion and future work}
In this paper, we propose the following conjecture about influence diffusion under the general threshold model in social networks: local \adk implies global \adk.
This conjecture is a refined version of KKT's conjecture: local monotonicity and submodularity imply global monotonicity and submodularity \cite{kempe2003infmax}.
We affirm the correctness of our conjecture when the social graph is a DAG.
For general graphs our conjecture is true when $k=1, 2$ (\cite{Mossel2010local2global}) and $k=\infty$ (proved in this paper).
The obvious open problem is to prove or disprove the conjecture for $3\leq k\leq n-1$ with general graphs. Other directions include investigating the mathematical nature of global AD-$k$ as well as its algorithmic consequence.

\newpage
\bibliographystyle{abbrv}
\bibliography{ad_k}	
\end{document}